\author{Przemysław Koprowski}
\title[Factorization and root-finding\dots]{Factorization and root-finding for polynomials over division quaternion algebras}
\newtheorem{alg}{Algorithm}
\newtheorem{thm}{Theorem}
\newtheorem{cor}[thm]{Corollary}
\newtheorem{lem}[thm]{Lemma}
\newtheorem{prop}[thm]{Proposition}
\theoremstyle{remark}
\newtheorem{rem}{Remark}
\newtheorem{exm}{Example}
\newenvironment{poc}[1][]{\begin{proof}[\ifthenelse{\equal{#1}{}}{Proof of correctness}{Proof of correctness of #1}]}{\end{proof}}
\newcommand{\mycal}[1]{\mathfrak{#1}}
\newcommand{\pp}{p}
\newcommand{\pq}{q}
\newcommand{\pr}{r}
\newcommand{\qpp}{\mycal{p}}
\newcommand{\qpq}{\mycal{q}}
\newcommand{\qpr}{\mycal{r}}
\newcommand{\qa}{\mycal{a}}
\newcommand{\qb}{\mycal{b}}
\newcommand{\qc}{\mycal{c}}
\newcommand{\qi}{\mycal{i}}
\newcommand{\qj}{\mycal{j}}
\newcommand{\qk}{\mycal{k}}
\newcommand{\CA}{\mathfrak{A}}
\newcommand{\CL}{\mathscr{L}}
\newcommand{\CR}{\mathscr{R}}
\newcommand{\HH}{\mathbb{H}}
\newcommand{\QQ}{\mathbb{Q}}
\newcommand{\RR}{\mathbb{R}}
\newcommand{\ZZ}{\mathbb{Z}}
\newcommand{\K}{K}
\renewcommand{\L}{L}
\newcommand{\Ax}{\CA[x]}
\newcommand{\charpoly}[1]{\chi_{#1}}
\newcommand{\conj}[1]{\overline{#1}}
\newcommand{\ext}[2]{#1/#2}
\newcommand{\form}[1]{\langle #1\rangle}
\newcommand{\gena}{\alpha}
\newcommand{\genb}{\beta}
\newcommand{\ideal}[1]{(#1)}
\newcommand{\Kx}{\K[x]}
\DeclareMathOperator{\lc}{lc}
\DeclareMathOperator{\lclm}{lclm}
\newcommand{\norm}[2][]{N_{#1}#2}
\newcommand{\Pform}[1]{\langle\langle #1\rangle\rangle}
\DeclareMathOperator{\gcrd}{gcrd}
\newcommand{\quo}[2]{\sfrac{#1}{#2}}
\newcommand{\quat}[3]{\bigl(\frac{#1,#2}{#3}\bigr)}
\newcommand{\Quat}{\quat{\gena}{\genb}{\K}}
\newcommand{\term}[1]{\emph{#1}}
\DeclareMathOperator{\Tr}{Tr}
\newcommand{\un}[1][K]{#1^{\times}}
\newcommand{\Magma}{\textsc{Magma}}
\newcommand{\qPoly}{\mbox{\smash{\scalebox{1.0}[0.8]{\textcursive{qPoly}}}}}
\begin{document}
\begin{abstract}
Polynomial factorization and root finding are among the most standard themes of computational mathematics. Yet still, little has been done for polynomials over quaternion algebras, with the single exception of Hamiltonian quaternions for which there are known numerical methods for polynomial root approximation. The sole purpose of the present paper is to present a polynomial factorization algorithm for division quaternion algebras over number fields, together with its adaptation for root finding.
\end{abstract}
\maketitle

\section{Introduction}
Polynomial factorization is one of the classical subjects in the realm of computational algebra. Over the decades, numerous algorithms have been developed for factoring polynomials over various base rings, like integers and rationals, finite fields, local and global fields, etc. Nonetheless, there are still many important classes of rings for which there are no known methods for polynomial factorization. One of these notable omissions is the class of quaternion algebras. Although quaternion polynomials are a classical subject that has been studied since the first half of the 20th century, the algorithmic side of the theory is strongly underdeveloped. To the best of our knowledge, the only algorithms developed so far concentrate exclusively on polynomials over real quaternion algebras with an emphasis on the Hamiltonian quaternions (see, e.g., \cite{JO2010, Kalantari2013, Niven1941, SKS2019, SP2001} for root-finding algorithms, or \cite{LSSS2023, LSS2019, SS2021} for factorization algorithms). Unfortunately, the methods developed for real quaternions cannot be applied in a general setting. The fundamental reason is that the Hamiltonian quaternions admit the fundamental theorem of algebra. It is not a case for general division quaternion algebras. The author is unaware of even a single paper addressing the factorization of polynomials over \emph{general} quaternion algebras. Thus, the sole purpose of this paper is to partially remedy this situation and to present algorithms for factoring polynomials over division quaternion algebras over number fields (see Algorithm~\ref{alg:complete_factor}) and to compute roots of polynomials in such algebras (see Algorithm~\ref{alg:root_finding}).

There are a few different ways how one can define polynomials over a general associative ring. If the ring is commutative, they all coincide, but no longer so when the ring of coefficients is non-commutative. In this paper, we follow the most classical approach, and by a polynomial, over an associative ring~$\CA$, we understand a sum $\qpp = \qa_0 + \qa_1 x + \dotsb + \qa_n x^n$, where the coefficients $\qa_0, \dotsc, \qa_n$ are taken from $\CA$ and the indeterminate~$x$ commutes with the coefficients. These are called \term{left polynomials} in \cite{GM1965, GM1966} or \term{unilateral polynomials} in more recent sources like \cite{FMSS2018, Kalantari2013, SP2001}. 

Throughout this paper, we use the following notational convention. Commutative rings and their elements are typeset using the standard typeface, while for non-commutative rings and their elements, we use fraktur letters. Further, $\K$ will always denote a number field, i.e. a finite extension of the field~$\QQ$ of the rationals. Recall that an algebra~$\CA$ over~$\K$ is called a \term{quaternion algebra} if it has a basis of the form $\{1, \qi, \qj, \qk\}$ subject to the relations 
\[
\qi^2 = \gena,\qquad 
\qj^2 = \genb,\qquad 
\qi\qj = \qk = -\qj\qi,
\]
for some nonzero elements $\gena, \genb\in \K$. The algebra~$\CA$ is then denoted $\Quat$. For an in-depth presentation of the theory of quaternion algebras, we refer the reader to \cite{Vigneras1980} and \cite{Voight2021}. We will always identify the base field~$\K$ with the subalgebra $\K\cdot 1$ of~$\CA$, which is known to form the center $Z(\CA)$ of~$\CA$. We shall write $\un$ (respectively $\CA^\times$) to denote the multiplicative group of nonzero elements of~$\K$ (resp. invertible elements in~$\CA$). 

Every quaternion algebra is equipped with a standard involution that will be denoted by an overbar in what follows. For a quaternion $\qa\in \CA$, the product $\qa\conj{\qa}$ is called the \term{norm} (in \cite{Lam2005}) or the \term{reduced norm} (in \cite{Vigneras1980, Voight2021}) of~$\qa$ and will be denoted $\norm{\qa}$. Likewise, the sum $\qa + \conj\qa$ is called the \term{\textup(reduced\textup) trace} of~$\qa$, denoted $\Tr\qa$. The natural inclusion $\quat{\gena}{\genb}{\K}[x]\hookrightarrow \quat{\gena}{\genb}{\K(x)}$ let us extend the notion of the norm and standard involution to quaternionic polynomials. In particular, if $\qpp = \qa_0 + \qa_1x + \dotsb + \qa_nx^n$, then $\conj{\qpp} = \conj{\qa_0} + \conj{\qa_1}x + \dotsb + \conj{\qa_n}x^n$ and $\norm{\qpp} = \qpp\conj{\qpp}$. We will use the same symbol $\norm{}$ to denote also the standard norm in a field extension and its natural prolongation to the polynomial rings. In case of ambiguity, we will indicate the corresponding extension in the norm's subscript.

A quaternion algebra is either a division algebra or is isomorphic to the $2\times 2$ matrix algebra $M_2\K$ (see, e.g., \cite[Theorem~5.4.4]{Voight2021}). The latter case happens if and only if the $2$-fold Pfister form $\Pform{-\gena, -\genb} = x_0^2 - \gena x_1^2 - \genb x_2^2 + \gena\genb x_3^2$ is \term{isotropic} (i.e. it has a non-trivial zero). The quaternion algebra $\Quat$ is then said to be \term{split}. 

\section{Preliminaries}
Observe that over a split quaternion algebra, every polynomial $\pp\in \Kx$ has a trivial factorization. Indeed, if we identify $\CA$ with $M_2\K$ and $\K$ with its subfield of scalar matrices, we can write 
\[
\begin{pmatrix} \pp & 0\\ 0 & \pp \end{pmatrix}
= \begin{pmatrix} \pp & 0\\ 0 & 1 \end{pmatrix}
  \cdot \begin{pmatrix} 1 & 0\\ 0 & \pp \end{pmatrix}
= \begin{pmatrix} \pp & 0\\ 0 & 1 \end{pmatrix}
  \cdot \conj{\begin{pmatrix} \pp & 0\\ 0 & 1 \end{pmatrix}}.
\]
Notice that both divisors of~$\pp$ constructed above have the same degree as the polynomial~$\pp$ itself. This is due to the fact that all their coefficients, save the constant terms, are zero-divisors. Consequently, in the rest of the paper, we concentrate entirely on polynomials over division quaternion algebras. 

Below we collect some known facts concerning polynomials over division quaternion algebras. They are all very classical but scattered throughout the literature and sometimes expressed using different terminology. For a general introduction to polynomial rings over division rings (including division quaternion algebras), we refer the reader to \cite[\S16]{Lam2001}.

\begin{thm}[Ore]
If $\CA$ is a division quaternion algebra, then the ring $\Ax$ of quaternionic polynomials admits a right-hand division with a remainder and a right-hand Euclidean algorithm. 
\end{thm}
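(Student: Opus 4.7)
The plan is to mimic the classical polynomial long‑division argument, being careful that everything is performed \emph{on the right}, since $\CA$ is non‑commutative but every nonzero element is invertible.

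First I would establish right‑hand division with remainder: given $\qpp,\qpq\in\Ax$ with $\qpq\ne 0$, I want to produce $\qpr,\qpp'\in\Ax$ with $\qpp=\qpr\cdot\qpq+\qpp'$ and either $\qpp'=0$ or $\deg\qpp'<\deg\qpq$. The argument proceeds by strong induction on $n=\deg\qpp$. If $n<\deg\qpq$, take $\qpr=0$, $\qpp'=\qpp$. Otherwise, write $m=\deg\qpq$, let $\qa$ be the leading coefficient of $\qpp$ and $\qb$ the leading coefficient of $\qpq$. Because $\CA$ is a division algebra, $\qb\in\CA^\times$, so I can form the monomial $\qa\qb^{-1}x^{n-m}\in\Ax$ and consider
\[
\qpp_1 \;:=\; \qpp \;-\; \bigl(\qa\qb^{-1}x^{n-m}\bigr)\cdot\qpq.
\]
The crucial point, which uses only that $x$ commutes with all elements of $\CA$, is that the leading term of $(\qa\qb^{-1}x^{n-m})\cdot\qpq$ is exactly $\qa\qb^{-1}\qb\,x^n=\qa x^n$, so $\deg\qpp_1<n$. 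By the inductive hypothesis, $\qpp_1=\qpr_1\qpq+\qpp'$ with $\qpp'=0$ or $\deg\qpp'<m$, and then $\qpr:=\qa\qb^{-1}x^{n-m}+\qpr_1$ does the job. Uniqueness follows in the usual way: if $\qpr\qpq+\qpp'=\qpr''\qpq+\qpp''$ with both remainders of degree $<m$, then $(\qpr-\qpr'')\qpq=\qpp''-\qpp'$; since $\CA$ has no zero divisors, the degree of the left side is $\deg(\qpr-\qpr'')+m$ whenever $\qpr\ne\qpr''$, which would exceed $m-1$, a contradiction.

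Once right division is in place, the right‑hand Euclidean algorithm is immediate. Given $\qpp,\qpq\in\Ax$ with $\qpq\ne 0$, set $\qpr_0:=\qpp$, $\qpr_1:=\qpq$, and iterate $\qpr_{i-1}=\qpc_i\qpr_i+\qpr_{i+1}$ using right division. At each stage either $\qpr_{i+1}=0$ or $\deg\qpr_{i+1}<\deg\qpr_i$; since degrees are non‑negative integers, the procedure terminates after finitely many steps with a last nonzero remainder $\qpr_N$. A standard back‑substitution shows $\qpr_N$ is a greatest common right divisor of $\qpp$ and $\qpq$: it divides both on the right (by running the recursion forward) and any common right divisor divides $\qpr_N$ (by running the recursion backward).

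I expect no serious obstacle; the only delicate point is the degree‑drop in the induction, which rests on two features of the setup, namely that $\CA$ is a division algebra (so that $\qb^{-1}$ exists) and that the indeterminate $x$ commutes with all coefficients (so that $(\qa\qb^{-1}x^{n-m})\cdot\qb x^m=\qa x^n$ without any twist). Both are built into our definitions, so the proof goes through verbatim from the commutative case; no analogous \emph{left} division is asserted because it would require left inverses to appear in the opposite order and would not in general give a well‑defined single‑sided remainder without additional hypotheses.
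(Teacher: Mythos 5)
Your long-division argument is correct and is the standard proof; the paper itself does not reproduce a proof but defers to Ore's original treatment, which proceeds in essentially the same way. Both key steps are sound exactly as you present them: the degree-drop rests on invertibility of the leading coefficient $\qb$ together with centrality of $x$, and the uniqueness step rests on the fact that $\deg(\qpr\qpq)=\deg\qpr+\deg\qpq$ for $\qpr\neq 0$, which holds because $\CA$ has no zero divisors so the product of leading coefficients is nonzero.

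One correction is in order for your closing paragraph, though it does not affect the theorem as stated. Under the present hypotheses, \emph{left} division with remainder holds equally well, by the mirror-image computation $\qpp_1 := \qpp - \qpq\cdot\bigl(\qb^{-1}\qa\, x^{\,n-m}\bigr)$, whose subtracted term again has leading coefficient $\qb\cdot\qb^{-1}\qa = \qa$ since $x$ is central and inverses in a division ring are two-sided. So the paper states only the right-hand version not because the left-hand one fails, but because the right-handed theory is the one that feeds into the $\gcrd$/$\lclm$ machinery used later (Proposition~\ref{prop:Beck}, Lemma~\ref{lem:semicommutativity}, Algorithm~\ref{alg:complete_factor}). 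The genuine one-sided obstruction you allude to arises for twisted polynomial rings $D[x;\sigma,\delta]$ when $\sigma$ is not an automorphism, which is not the situation in $\Ax$.
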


In particular, for any two quaternionic polynomials $\qpp, \qpq\in \Ax$, there exists a unique monic greatest common right divisor (denoted $\gcrd(\qpp, \qpq)$ hereafter) and the least common left multiple (denoted $\lclm(\qpp, \qpq)$). In fact, \cite[Theorem~I.8]{Ore1933} provides an explicit formula for constructing the least common left multiple from the sequence of the right reminders computed by the right-hand Euclidean algorithm. However, we will not use this formula in the present paper.

\begin{thm}[Ore] 
If $\CA$ is a division quaternion algebra, then every monic polynomial $\qpp\in \Ax$ can be expressed as a product of irreducible quaternionic polynomials. Every two such factorizations of~$\qpp$ have the same number of factors.
\end{thm}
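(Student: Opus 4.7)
The plan is to dispatch existence by a straightforward degree induction, and to reduce the length-invariance assertion to the Jordan--Hölder theorem for modules of finite length over $\Ax$, made available by the preceding Euclidean theorem.

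For existence I would argue by strong induction on $\deg\qpp$. The base case is the empty product. For $\deg\qpp\geq 1$, if $\qpp$ is irreducible we stop; otherwise $\qpp=\qpq\qpr$ with both factors of strictly smaller positive degree. Since $\qpp$ is monic and $\Ax$ is a domain, $\lc(\qpq)\lc(\qpr)=1$, so setting $c:=\lc(\qpr)\in\CA^\times$ we may replace the factorization by the monic one $\qpp = (\qpq c)\bigl(c^{-1}\qpr\bigr)$ before invoking the inductive hypothesis on each factor.

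For length invariance, the right-hand Euclidean algorithm provided by Ore's previous theorem implies at once, by the usual minimal-degree argument, that every left ideal of $\Ax$ is principal. It follows that a monic irreducible $\qpq$ generates a maximal left ideal and $\quo{\Ax}{\Ax\qpq}$ is a simple left $\Ax$-module. Given a factorization $\qpp=\qpq_1\qpq_2\cdots\qpq_m$ into monic irreducibles, the partial products taken from the right yield a strictly descending chain of left ideals
\[
\Ax \supsetneq \Ax\qpq_m \supsetneq \Ax\qpq_{m-1}\qpq_m \supsetneq \dotsb \supsetneq \Ax\qpq_1\cdots\qpq_m = \Ax\qpp,
\]
and for each $i$ the map $\qa\mapsto\qa\cdot\qpq_{i+1}\cdots\qpq_m$ induces an isomorphism $\quo{\Ax\cdot\qpq_{i+1}\cdots\qpq_m}{\Ax\cdot\qpq_i\qpq_{i+1}\cdots\qpq_m}\cong\quo{\Ax}{\Ax\qpq_i}$, where the domain property of $\Ax$ is used to identify the kernel. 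Reducing modulo $\Ax\qpp$ then produces a composition series of the left $\Ax$-module $M:=\quo{\Ax}{\Ax\qpp}$ of length exactly $m$.

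Since $\dim_\K M = 4\deg\qpp$ is finite, $M$ has finite length, so the Jordan--Hölder theorem forces any two composition series of $M$ to have the same length. A second factorization $\qpp=\qpr_1\cdots\qpr_n$ yields a composition series of length $n$ by the identical construction, whence $m=n$. I expect the one genuinely non-routine step to be the equivalence ``$\qpq$ irreducible $\Leftrightarrow$ $\Ax\qpq$ is a maximal left ideal'': the bookkeeping between left divisors, right divisors, and one-sided ideals is notoriously easy to get inverted, and without it set down cleanly the correspondence between factorizations of $\qpp$ and composition series of $M$ cannot be asserted rigorously.
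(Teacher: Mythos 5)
The paper does not prove this theorem; it cites Ore's 1933 paper (\emph{Ann.\ of Math.}~34, Chapter~I and Theorem~II.1). Ore's own argument for the invariance of the number of factors is a direct exchange argument on polynomial factorizations: using $\lclm$ and $\gcrd$ manipulations he shows that given two factorizations into irreducibles, one can transform one into the other one step at a time, replacing factors by similar ones, and concludes that the numbers of factors coincide (and in fact that the factors are similar up to reordering). Your proof takes a genuinely different, more modern route: you translate the problem into the Jordan--H\"older theorem for the left $\Ax$-module $\quo{\Ax}{\Ax\qpp}$. The translation is correct: the right-hand Euclidean algorithm makes $\Ax$ a left PID; the dictionary ``monic $\qpq$ irreducible $\Longleftrightarrow$ $\Ax\qpq$ maximal'' follows immediately from principality plus the degree formula $\deg(\qa\qpr)=\deg\qa+\deg\qpr$ (valid since $\CA$ is a division ring, so $\Ax$ is a domain); the chain built from right-hand partial products has simple quotients $\cong\quo{\Ax}{\Ax\qpq_i}$ via the injective left-module map $\qa\mapsto\qa\cdot\qpq_{i+1}\dotsm\qpq_m$; and $\dim_\K\quo{\Ax}{\Ax\qpp}=4\deg\qpp<\infty$ guarantees finite length. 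What Ore's computational argument buys is self-containedness at the elementary level and an explicit factor-exchange procedure that is sometimes useful algorithmically; what the Jordan--H\"older route buys is brevity, conceptual clarity, and, for free, the stronger statement that the two lists of irreducible factors are pairwise similar up to permutation (which you did not state but which your composition-series isomorphism gives you at once). You flagged the maximal-ideal correspondence as the one potentially slippery point, and correctly so, but as written it is sound: with $\Ax$ a left PID, a proper left ideal $\Ax\qpr\supsetneq\Ax\qpq$ forces a nontrivial factorization $\qpq=\qa\qpr$ with $0<\deg\qpr<\deg\qpq$, and conversely.
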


For proofs of the above two theorem see \cite[Chapter~I, \S2 and Theorem~II.1]{Ore1933}. Of course, in general, a quaternionic polynomial may have infinitely many different factorizations, even assuming that all the factors are monic. To observe this phenomenon take any non-constant monic polynomial $\pp\in \Kx$ with coefficients in the ground field. Assume that $\pp$ factors into a product $\pp = \qpp\qpq$ of two monic quaternionic polynomials $\qpp \qpq\in \Ax$, not necessarily irreducible. Then for every invertible quaternion $\qa \in \CA$ we have 
\[
\pp = \qa\pp\qa^{-1} = (\qa\qpp \qa^{-1})(\qa\qpq \qa^{-1}).
\]
If only $\qpp, \qpq\notin \Kx$, this way we obtain another factorization of~$\pp$, where $\qa\qpp\qa^{-1}$ and $\qa\qpq\qa^{-1}$ are again monic. 

\begin{prop}[Beck]\label{prop:Beck}
Let $\CA$ be a division quaternion algebra over~$\K$. Then every polynomial $\qpp\in \Ax$ can be uniquely expressed as a product $\qpp = \qc\cdot \qpq\cdot \pp$, where $\qc \in \CA^\times$ is the leading coefficient of~$\qpp$, $\pp$ is a monic polynomial with coefficients in~$\K$ and $\qpq\in \Ax$ is a monic quaternionic polynomial not divisible by any non-constant polynomial from $\Kx$. Moreover, if $\Ax$ is treated as a free module of rank~$4$ over $\Kx$ with a basis $\{ 1, \qi, \qj, \qk\}$, then $\pp$ is the greatest common divisor \textup(in~$\Kx$\textup) of the coordinates of~$\qpp$.
\end{prop}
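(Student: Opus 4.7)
The plan is to translate the factorization question into a gcd computation in the commutative polynomial ring $\Kx$, via the $\Kx$-module structure on $\Ax$.

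Since $\CA$ is a division algebra, the leading coefficient $\qc$ of $\qpp$ is invertible, so multiplying on the left by $\qc^{-1}$ reduces matters to the case when $\qpp$ is monic and we need a factorization $\qpp = \qpq\cdot\pp$. Write $\qpp = f_0 + f_1\qi + f_2\qj + f_3\qk$ and a putative cofactor $\qpq = g_0 + g_1\qi + g_2\qj + g_3\qk$ with $f_j,g_j\in\Kx$. The crucial observation is that every element of $\Kx$ lies in the centre of $\Ax$, so for any $\pp\in\Kx$,
\[
\qpq\cdot\pp \;=\; (g_0\pp) + (g_1\pp)\,\qi + (g_2\pp)\,\qj + (g_3\pp)\,\qk.
\]
Hence $\qpp = \qpq\cdot\pp$ holds if and only if $\pp$ divides each $f_j$ in $\Kx$. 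In particular, the monic polynomials of $\Kx$ dividing $\qpp$ are exactly the monic common divisors of $f_0,f_1,f_2,f_3$, and the unique maximal such polynomial is $\pp := \gcd(f_0,f_1,f_2,f_3)$.

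From here I would set $g_j := f_j/\pp$ in $\Kx$ and $\qpq := g_0 + g_1\qi + g_2\qj + g_3\qk$, which by the equivalence just established satisfies $\qpp = \qpq\cdot\pp$; comparing leading coefficients shows that $\qpq$ is monic whenever $\qpp$ is. If $\qpq$ were divisible by a non-constant monic $t\in\Kx$, applying the same coordinate observation to $\qpq$ would force $t\cdot\pp$ to be a common divisor of the $f_j$, contradicting the maximality of $\pp$. For uniqueness, in any other decomposition $\qpp = \qpq_1\cdot\pp_1$ the polynomial $\pp_1$ is again a monic element of $\Kx$ dividing $\qpp$, hence $\pp_1\mid\pp$; writing $\pp = \pp_1 h$ with $h\in\Kx$ monic and cancelling $\pp_1$ in the domain $\Ax$ yields $\qpq_1 = \qpq\cdot h$, and the hypothesis on $\qpq_1$ forces $h = 1$. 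The final assertion identifying $\pp$ with $\gcd(f_0,f_1,f_2,f_3)$ is then just the construction.

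The argument presents no real obstacle: everything hinges on the single observation that, because $\Kx$ sits in the centre of $\Ax$, divisibility of $\qpp$ by a central polynomial reduces to simultaneous divisibility of its four $\Kx$-coordinates. Once that is in place, existence and uniqueness are short bookkeeping, using only that $\Ax$ is a domain (which holds because $\CA$ is a division algebra) to justify the cancellation step in uniqueness.
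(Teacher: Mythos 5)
Your proof is correct. Note, however, that the paper does not supply its own argument for this proposition---it simply refers the reader to B\'esultat~4 of Beck's paper~\cite{Beck1979}---so there is nothing in the source to compare against. Your argument is the natural direct one: reduce to the monic case, observe that since $\Kx$ is central in $\Ax$ a monic $\pp\in\Kx$ right-divides $\qpp$ if and only if $\pp$ divides each of the four $\Kx$-coordinates of $\qpp$, so the maximal such $\pp$ is $\gcd(f_0,\dotsc,f_3)$; then existence, the ``not divisible by a non-constant central polynomial'' property of $\qpq$, and uniqueness all follow by short bookkeeping, the only nontrivial ingredient being that $\Ax$ is a domain (which you correctly attribute to $\CA$ being a division algebra, so that leading coefficients multiply without vanishing, allowing the cancellation in the uniqueness step). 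The argument is sound and complete.
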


The polynomial~$\pp$ described in the last proposition is called the \term {maximal central factor} of~$\qpp$. A proof of Proposition~\ref{prop:Beck} can be found in \cite[R\'esultat~4]{Beck1979}. In view of this proposition, it is clear that in order to factor quaternionic polynomials, it suffices to have a procedure that factors in $\Ax$ polynomials that are irreducible in $\Kx$, and another procedure that factors quaternionic polynomials not divisible by polynomials from $\Kx$. 

\begin{lem}
Let $\CA = \Quat$ be a division quaternion algebra over~$\K$ and $\qa, \qb\in \CA$ be two quaternions. If $\qa\qb\in Z(\CA) = \K$, then the following two conditions hold: 
\begin{enumerate}
\item $\qa\qb = \qb\qa$;
\item  there are scalars $c, d \in \K$ such that $c\cdot \qb = d\cdot \conj{\qa}$
\end{enumerate}
\end{lem}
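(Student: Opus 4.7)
The plan is to exploit the fact that a division quaternion algebra admits multiplicative inverses (given explicitly by the standard involution and the reduced norm), which will reduce both claims to short manipulations with a central element.

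First I would dispose of the degenerate case $\qa = 0$: then $\qa\qb = 0 = \qb\qa$, and the relation $c\cdot \qb = d\cdot \conj{\qa}$ holds trivially with $c = d = 0$ (or with $c = 0$ and any $d$). So from here on assume $\qa \neq 0$; since $\CA$ is a division algebra, $\qa$ is invertible, and in fact $\qa^{-1} = \conj{\qa}/\norm{\qa}$, because $\qa\conj{\qa} = \conj{\qa}\qa = \norm{\qa} \in \un$.

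For part~(1), write $t := \qa\qb \in Z(\CA) = \K$. Since $t$ is central, it commutes with $\qa$, giving
\[
\qa^2\qb = \qa\cdot(\qa\qb) = (\qa\qb)\cdot\qa = \qa\qb\qa.
\]
Cancelling the (invertible) factor $\qa$ on the left yields $\qa\qb = \qb\qa$, which is exactly assertion~(1).

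For part~(2), the same element $t = \qa\qb$ is a scalar, so
\[
\qb = \qa^{-1}\cdot(\qa\qb) = \frac{\conj{\qa}}{\norm{\qa}}\cdot t = \frac{t}{\norm{\qa}}\cdot \conj{\qa},
\]
where the last equality uses that $t\in \K$ commutes with $\conj{\qa}$. Clearing the denominator gives $\norm{\qa}\cdot \qb = t\cdot \conj{\qa}$, so we may take $c := \norm{\qa}\in \un$ and $d := t = \qa\qb\in \K$, establishing~(2).

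There is no serious obstacle here; the only point that deserves care is recording that in a division quaternion algebra the inverse of a nonzero element has the explicit form $\conj{\qa}/\norm{\qa}$, so that the scalars $c,d$ produced in~(2) really lie in~$\K$ and not merely in~$\CA$. The argument also shows, as a by-product, that one can always choose $c = \norm{\qa}$ to be nonzero whenever $\qa\neq 0$.
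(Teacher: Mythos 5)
Your proof is correct and follows essentially the same route as the paper's: handle the degenerate case trivially, then use the invertibility of $\qa$ and the centrality of $\qa\qb$ to conjugate through to get commutativity, and use $\qa^{-1} = \conj{\qa}/\norm{\qa}$ to produce the scalars $c = \norm{\qa}$ and $d = \qa\qb$ in part~(2). The only cosmetic difference is that you cancel $\qa$ from $\qa^2\qb = \qa\qb\qa$ rather than expanding $\qb\qa = \qa^{-1}(\qa\qb)\qa$ directly, which is the same computation read in the opposite direction.
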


\begin{proof}
If $\qa = 0$ or $\qb = 0$, then the assertions hold trivially. Hence, without loss of generality, we can assume that $\qa, \qb\in \CA^\times$ and so their product $d := \qa\qb$ is nonzero. Compute 
\[
\qb\qa 
= \qa^{-1}\qa\qb\qa 
= \qa^{-1}\cdot d\cdot \qa 
= d\cdot \qa^{-1}\qa = \qa\qb.
\]
This proves the first assertion. The second one follows immediately. Indeed, from $\qa\qb = d$ we infer
\[
\norm(\qa)\cdot \qb 
= \norm(\qa)\cdot \qa^{-1}\cdot d = d\cdot \conj{\qa}.
\qedhere
\]
\end{proof}

\begin{rem}
The assumption that $\CA$ is a division algebra is indispensable. To observe that, take the split quaternion algebra $\CA = M_2\K$ and let 
\[
\qa := \begin{pmatrix} 0 & 0\\ 0 & 1 \end{pmatrix}
\qquad\text{and}\qquad
\qb := \begin{pmatrix} 1 & 1\\ 0 & 0 \end{pmatrix}.
\]
Then $\qa\qb$ is the zero matrix, hence in particular $\qa\qb\in Z(\CA)$ but
\[
\qb\qa = \begin{pmatrix} 0 & 1\\ 0 & 0 \end{pmatrix}\neq \qa\qb.
\]
\end{rem}

\begin{prop}\label{prop:factorization_of_irreducible}
Let $\CA = \Quat$ be a quaternion division  algebra over~$\K$. Further, let $\pp\in \Kx$ be a monic irreducible polynomial with coefficients in~$\K$. Then either $\pp$ remains irreducible in $\Ax$ or it factors into a product $\pp = \qpp\conj{\qpp}$ for some irreducible quaternionic polynomial $\qpp\in \Ax$.
\end{prop}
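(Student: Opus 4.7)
The plan is to suppose that $\pp$ is reducible in $\Ax$, write $\pp = \qpp\,\qpq$ with both $\qpp$ and $\qpq$ of positive degree, and then show that after a suitable normalization one can take $\qpq = \conj{\qpp}$, with $\qpp$ automatically irreducible.

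First, I would reduce to the case of monic factors. Let $\qc\in\CA^\times$ denote the leading coefficient of $\qpp$; since $\CA$ is a division algebra, $\qc$ is invertible. Then $\qpp\,\qc^{-1}$ is monic of the same degree as $\qpp$, and rewriting $\pp = (\qpp\,\qc^{-1})(\qc\,\qpq)$ and comparing leading coefficients shows that $\qc\,\qpq$ is monic as well, since $\pp$ itself is monic. We may therefore assume henceforth that both $\qpp$ and $\qpq$ are monic.

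Next, I would exploit multiplicativity of the reduced norm on $\Ax$. Using that $\conj{\qpp\,\qpq} = \conj{\qpq}\,\conj{\qpp}$ and that $\norm{\qpq}\in\Kx$ lies in the centre of $\Ax$, a direct computation
\[
\norm{\qpp\,\qpq}
= \qpp\,\qpq\,\conj{\qpq}\,\conj{\qpp}
= \qpp\,\norm{\qpq}\,\conj{\qpp}
= \norm{\qpp}\cdot\norm{\qpq}
\]
gives the multiplicativity. Since $\conj{\pp} = \pp$, we have $\norm{\pp} = \pp^{2}$, so applying the norm to $\pp = \qpp\,\qpq$ yields $\norm{\qpp}\cdot\norm{\qpq} = \pp^{2}$ inside the UFD $\Kx$. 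As $\qpp,\qpq$ are monic and non-constant, their norms are monic and non-constant polynomials in $\Kx$, so irreducibility of $\pp$ forces $\norm{\qpp} = \norm{\qpq} = \pp$.

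Finally, the equality $\qpp\,\conj{\qpp} = \norm{\qpp} = \pp = \qpp\,\qpq$ combined with the fact that $\Ax$ is a domain yields, by left-cancellation, $\qpq = \conj{\qpp}$, as claimed. The irreducibility of $\qpp$ in $\Ax$ is then immediate from the same norm argument: any non-trivial factorization $\qpp = \qpp_{1}\,\qpp_{2}$ with both factors of positive degree would produce a non-trivial factorization $\pp = \norm{\qpp_{1}}\cdot\norm{\qpp_{2}}$ in $\Kx$, contradicting the irreducibility of $\pp$. I do not foresee a serious obstacle; the only delicate points are the initial bookkeeping of leading coefficients and the multiplicativity of the norm on $\Ax$, which in turn relies on $\Kx$ being the centre of $\Ax$.
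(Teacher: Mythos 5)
Your proof is correct, and it is a little cleaner than the paper's in its execution. The paper proves the first claim by invoking an earlier lemma on central products (in the algebra $\quat{\gena}{\genb}{\K(x)}$) to deduce that $\qpq$ is a $\K(x)$-scalar multiple of $\conj{\qpp}$, and then argues about degrees; you instead go directly through multiplicativity of the reduced norm, $\norm{\qpp}\cdot\norm{\qpq} = \pp^{2}$, to pin down $\norm{\qpp} = \norm{\qpq} = \pp$, and then finish by left-cancellation in the domain $\Ax$. Both arguments ultimately rest on the same two facts — multiplicativity of the norm and irreducibility of $\pp$ in the UFD $\Kx$ — so the difference is mostly in packaging, but your route sidesteps an implicit step in the paper (that the scalar $\pq$ with $\qpq = \pq\cdot\conj{\qpp}$ is a polynomial rather than just a rational function), and your initial normalization to monic factors makes the later identity $\pp = \qpp\conj{\qpp}$ exact rather than only up to a unit. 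Two small points worth making explicit: (i) $\Ax$ really is a domain (and hence admits left-cancellation) because over a division algebra the leading coefficient of a product is the product of the leading coefficients; and (ii) $\norm{\qpp_i}$ is non-constant whenever $\qpp_i$ is, since the norm form of $\CA$ is anisotropic, which is what makes the final contradiction $\pp = \norm{\qpp_1}\cdot\norm{\qpp_2}$ a genuine non-trivial factorization. With those remarks spelled out, the argument is complete.
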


\begin{proof} 
Suppose that $\pp$ is not irreducible in $\Ax$. Let $\pp = \qpp\qpq$ be some factorization of~$\pp$ with $\qpp,\qpq\in \Ax$. Applying the previous lemma to the quaternion algebra $\quat{\gena}{\genb}{\K(x)}$ and using the fact that $\qpp$, $\qpq$ and $\pp$ are all polynomials we obtain that 
\[
\pp = \qpp\conj{\qpp}\cdot \pq
\]
for some polynomial $\pq\in \Kx$. Since, the norm $\norm{\qpp} = \qpp\conj{\qpp}$ also sits in $\Kx$, it follows from the irreducibility of~$\pp$ that $\deg \pq = 0$, i.e. $\pq\in \K$. 

It remains to prove that $\qpp$ is irreducible in $\Ax$. Suppose that there are some non-constant polynomials $\qpq, \qpr\in \Ax$, such that $\qpp = \qpq\qpr$. Therefore
\[
\pp 
= \qpq\qpr\cdot \conj{\qpq\qpr} 
= \qpq\cdot \qpr\conj{\qpr}\cdot \conj{\qpq}
= \qpq\conj{\qpq}\cdot \qpr\conj{\qpr}.
\]
But this contradicts the irreducibility of~$\pp$ as $\qpq\conj{\qpq}, \qpr\conj{\qpr}\in \Kx$.
\end{proof}

We may now present a criterion of irreducibility in $\Ax$.

\begin{prop}\label{prop:ireeducibility_criterium}
Let $\CA = \Quat$ be a division quaternion algebra over a number field~$\K$. Further, let $\pp\in \Kx$ be a monic irreducible polynomial and $\L := \quo{\Kx}{\ideal{\pp}}$. Then $\pp$ remains irreducible in $\Ax$ if and only if $\CA\otimes \L$ does not split.
\end{prop}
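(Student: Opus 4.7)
The plan is to pivot everything through the quotient ring $R := \quo{\Ax}{\ideal{\pp}}$. Since $\pp \in \Kx$ is central in $\Ax$, the ideal $\ideal{\pp}$ is two-sided and the quotient inherits a ring structure; comparing generators and relations identifies it with the scalar extension $\CA\otimes_\K \L$ as an $\L$-algebra. Now $\CA\otimes\L$ is a quaternion algebra over the field~$\L$, hence it is either a division algebra or isomorphic to $M_2(\L)$, splitting precisely in the latter case, equivalently, exactly when it has a nontrivial zero divisor. So the proposition is equivalent to showing: $\pp$ is irreducible in $\Ax$ if and only if $R$ has no zero divisors.

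For $(\Rightarrow)$ I would argue the contrapositive using Proposition~\ref{prop:factorization_of_irreducible}: if $\pp$ factors nontrivially in $\Ax$, that proposition gives $\pp = \qpp\conj{\qpp}$ with $1 \leq \deg \qpp < \deg \pp$. Reducing modulo $\pp$ produces two nonzero classes in $R$ whose product vanishes, exhibiting a zero divisor and hence splitting $\CA\otimes\L$.

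For $(\Leftarrow)$, suppose $R$ splits and lift a pair of nonzero classes with zero product to representatives $\qpp,\qpq \in \Ax$ of degree strictly less than $\deg \pp$; then $\pp$ divides $\qpp\qpq$ in $\Ax$ but divides neither factor. Let $\qpr := \gcrd(\pp, \qpp)$ and use the right-hand Euclidean algorithm to produce a Bezout-type identity $\qa\pp + \qb\qpp = \qpr$ with $\qa,\qb \in \Ax$. If $\qpr = 1$, multiplying on the right by $\qpq$ and then sliding the central polynomial $\pp$ through quaternionic factors would force $\pp \mid \qpq$, contradicting the choice of $\qpq$. Therefore $\qpr$ is a nontrivial monic right divisor of $\pp$, and because it also right-divides $\qpp$ one has $1 \leq \deg \qpr \leq \deg \qpp < \deg \pp$, giving a nontrivial factorization $\pp = \qpp'\qpr$ in $\Ax$. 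The main technical care lies in this Bezout step: keeping multiplication on the correct side and using centrality of $\pp$ to move it past quaternionic polynomials at exactly the right moments.
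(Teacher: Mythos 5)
Your proof is correct, but it takes a genuinely different route from the paper's. The paper's argument is very short: it invokes Proposition~\ref{prop:factorization_of_irreducible} to translate irreducibility of~$\pp$ in~$\Ax$ into the condition that $\pp$ is not a norm $\qpp\conj{\qpp}$, rephrases this as $\pp$ not being represented over $\Kx$ by the Pfister form $\Pform{-\gena,-\genb}$, and then appeals to a theorem of Pourchet (or Rajwade's Lemma~17.3) stating that a monic irreducible $\pp$ is represented by the norm form over~$\Kx$ if and only if $\L$ splits~$\CA$. You instead work directly in the quotient $R := \quo{\Ax}{\ideal{\pp}} \cong \CA\otimes_\K\L$: the forward implication uses Proposition~\ref{prop:factorization_of_irreducible} only to manufacture a zero divisor in~$R$ from a factorization $\pp = \qpp\conj{\qpp}$, and your converse is a genuinely new argument — lift a zero divisor to polynomials $\qpp, \qpq$ of degree $< \deg\pp$ with $\pp \mid \qpp\qpq$, then use the left Bezout identity $\qa\pp + \qb\qpp = \gcrd(\pp,\qpp)$ together with centrality of~$\pp$ to rule out $\gcrd(\pp,\qpp)=1$, and conclude that $\pp$ has a proper monic right factor. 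Your approach is more elementary and self-contained, avoiding the deep Pourchet representation theorem and relying only on Ore's right Euclidean structure of~$\Ax$; the paper's is shorter once Proposition~\ref{prop:factorization_of_irreducible} and the external form-theoretic input are in place. One small remark: your converse only establishes reducibility of~$\pp$, not the sharper shape $\pp = \qpp\conj{\qpp}$, but that sharper form is exactly Proposition~\ref{prop:factorization_of_irreducible} and is not needed here, so the argument closes cleanly.
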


\begin{proof}
Proposition~\ref{prop:factorization_of_irreducible} asserts that $\pp$ is irreducible in $\Ax$ if and only if it does not equal the norm $\norm{\qpp} = \qpp\conj{\qpp}$ of any polynomial $\qpp\in \Ax$. The latter condition is equivalent to the one that $\pp$ is not represented over $\Kx$ by the Pfister form $\Pform{-\gena, -\genb} = \form{1, -\gena, -\genb, \gena\genb}$, which is the norm form of~$\CA$. Now, since $\pp$ is monic by assumption, it follows from \cite[Proposition~3]{Pourchet1971} (or \cite[Lemma~17.3]{Rajwade1993}) that $\pp$ is not represented by this form if and only if $L$ is not the splitting field for~$\CA$. 
\end{proof}

\begin{rem}
A procedure for testing if a quaternion algebra over a number field splits is described in \cite{Voight2013} and readily available in some computer algebra systems like Magna~\cite{BCP1997} and Sage~\cite{sagemath}. Hence, the above irreducibility test can be easily implemented. An example implementation for the computer algebra system \Magma{}~\cite{BCP1997} was prepared by the author (see the closing section).
\end{rem}

Comparing the degrees, we obtain the following immediate consequence of Proposition~\ref{prop:ireeducibility_criterium} that can be used as a quick-exit in the procedure for testing the irreducibility. 

\begin{cor} 
Let $\CA$ be a division quaternion algebra and $\pp\in \Kx$ be a monic, irreducible polynomial of odd degree, then $\pp$ remains irreducible in $\Ax$.
\end{cor}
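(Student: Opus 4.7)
The plan is to argue by contradiction, invoking Proposition~\ref{prop:factorization_of_irreducible}, which supplies the structural fact powering the irreducibility criterion of Proposition~\ref{prop:ireeducibility_criterium}. Although the corollary is advertised as a consequence of Proposition~\ref{prop:ireeducibility_criterium}, the quickest route is to exploit the precise shape of the only possible non-trivial factorization of a monic irreducible $\pp\in\Kx$ inside $\Ax$.

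First I would suppose, for the sake of contradiction, that $\pp$ fails to be irreducible in $\Ax$. Since $\pp$ is by hypothesis a monic irreducible polynomial of $\Kx$, the dichotomy established by Proposition~\ref{prop:factorization_of_irreducible} forces a non-trivial factorization of the form $\pp = \qpp\conj{\qpp}$ for some irreducible $\qpp\in\Ax$.

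Next, I would observe that the standard involution, extended coefficientwise to $\Ax$, preserves the degree of a polynomial. Indeed, from the explicit description $\conj{\qpp} = \conj{\qa_0}+\conj{\qa_1}x+\dotsb+\conj{\qa_n}x^n$ recalled in the introduction, together with the fact that the leading coefficient $\qa_n$ is nonzero in the division algebra $\CA$ (hence $\conj{\qa_n}$ is also nonzero, since $\qa_n\conj{\qa_n}$ is the reduced norm of a nonzero quaternion, which must be nonzero in a division algebra), we conclude that $\deg\conj{\qpp} = \deg\qpp$. Comparing degrees in the identity $\pp = \qpp\conj{\qpp}$ then yields $\deg\pp = 2\deg\qpp$, which is even, contradicting the odd-degree hypothesis on $\pp$.

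There is essentially no obstacle here: the entire content is the observation that any factorization forced by Proposition~\ref{prop:factorization_of_irreducible} automatically has even total degree. This is precisely why the corollary can reasonably be described as an \emph{immediate} consequence of the preceding material, and why it can serve as a cheap quick-exit in an irreducibility test before invoking the more expensive splitting check on $\CA\otimes\L$.
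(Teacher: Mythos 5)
Your proof is correct. The only divergence from the paper is cosmetic: the paper advertises the corollary as a consequence of Proposition~\ref{prop:ireeducibility_criterium} (the splitting criterion), with the phrase ``comparing the degrees'' presumably alluding to the classical fact that a division quaternion algebra over a number field cannot split over an odd-degree extension (e.g.\ by Springer's theorem for the norm form, or by the index-divides-degree property of splitting fields). You instead route through Proposition~\ref{prop:factorization_of_irreducible} directly: any nontrivial factorization of a monic irreducible $\pp\in\Kx$ in $\Ax$ has the shape $\pp=\qpp\conj{\qpp}$, and since the standard involution preserves degree (the leading coefficient has nonzero reduced norm in a division algebra), the degree of $\pp$ is forced to be even. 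This is more elementary and more self-contained, as it does not invoke any splitting theory; it also sidesteps the need to know that the index obstruction survives odd-degree base change. Both arguments hinge on the same observation that the quaternionic factorization of a central irreducible necessarily doubles a degree, so the underlying idea matches the paper's; you have simply picked the lower-level proposition from which to launch it.
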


To conclude this section, we present a complete irreducibility criterion for quaternionic polynomials that can be directly implemented in a computer algebra system.

\begin{thm}
Let $\CA$ be a division quaternion algebra over a number field~$\K$. Further, Let $\qpp\in \Ax$ be a non-constant polynomial and $\qpp = \qc\cdot \qpq\cdot \pp$ be its decomposition as in Proposition~\ref{prop:Beck}. 
\begin{enumerate}
\item If $\qpq = 1$, then $\qpp$ is irreducible in~$\Ax$ if and only if $\pp$ is irreducible in $\Kx$ and $\CA\otimes \L$ does not split, where $\L := \quo{\Kx}{\ideal{\pp}}$.
\item\label{it:irred_q} If $\pp = 1$, then $\qpp$ is irreducible in~$\Ax$ if and only if the norm $\norm\qpq$ of~$\qpq$ is irreducible in~$\Kx$. 
\item If $\qpq\neq 1$ and $\pp\neq 1$, then $\qpp$ is reducible in~$\Ax$.
\end{enumerate}
\end{thm}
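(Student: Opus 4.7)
The plan is to dispatch the three cases separately, invoking the earlier results. Part (iii) is immediate: the Beck decomposition itself already provides a non-trivial factorization $\qpp = (\qc \cdot \qpq) \cdot \pp$, since both $\qpq$ and $\pp$ are non-constant monic polynomials. For part (i), the factor $\qc \in \CA^{\times}$ is a unit in $\Ax$, so $\qpp$ is irreducible if and only if $\pp$ is irreducible in $\Ax$; any non-trivial factorization of $\pp$ in $\Kx$ would also live in $\Ax$, so irreducibility in $\Kx$ is necessary, after which Proposition~\ref{prop:ireeducibility_criterium} supplies the equivalence with $\CA \otimes \L$ not splitting.

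The real work lies in part (ii). The easy direction (from the irreducibility of the norm to that of $\qpq$) proceeds by contrapositive: any non-trivial factorization $\qpq = \qpq_1 \qpq_2$ induces $\norm{\qpq} = \norm{\qpq_1} \cdot \norm{\qpq_2}$ (by pulling the central polynomial $\norm{\qpq_2}$ past $\conj{\qpq_1}$), and since every non-constant quaternionic polynomial has a non-constant norm (degrees double), both factors on the right are non-constant, contradicting the irreducibility of $\norm{\qpq}$.

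For the converse I would argue by contradiction. Suppose $\qpq$ is irreducible in $\Ax$ and yet $\norm{\qpq} = \pp_1 \pp_2$ factors non-trivially in $\Kx$, with $\pp_1$ taken irreducible in $\Kx$ after possibly passing to a further refinement. The assumption $\pp = 1$, together with $\qpp$ being non-constant, forces $\qpq \notin \Kx$ and, more importantly, that $\qpq$ admits no non-constant central divisor; this property transfers to $\conj{\qpq}$ via the standard involution, since central divisors are invariant under conjugation. Applying Proposition~\ref{prop:factorization_of_irreducible} to $\pp_1$ leaves two subcases. If $\pp_1$ remains irreducible in $\Ax$, then Proposition~\ref{prop:ireeducibility_criterium} identifies $\Ax/\ideal{\pp_1}$ with the division algebra $\CA \otimes \L_1$ where $\L_1 = \quo{\Kx}{\ideal{\pp_1}}$, so the central element $\pp_1$ behaves as a prime in $\Ax$; consequently $\pp_1 \mid \qpq \conj{\qpq}$ forces $\pp_1 \mid \qpq$ or $\pp_1 \mid \conj{\qpq}$, both contradicting the absence of non-constant central divisors of $\qpq$. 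If instead $\pp_1 = \qpp \conj{\qpp}$ with $\qpp$ irreducible in $\Ax$, the identity $\qpq \conj{\qpq} = \qpp \conj{\qpp} \cdot \pp_2$ exhibits two factorizations of $\norm{\qpq}$ into irreducibles: the left has length $2$, whereas the right has length at least $3$ (since $\pp_2 \in \Kx$ is non-constant and hence factors into at least one irreducible in $\Ax$), violating the uniqueness of length in Ore's theorem. The main obstacle is precisely this second subcase, where the primality argument available in the first subcase is no longer at hand and one must invoke Ore's length-uniqueness theorem instead.
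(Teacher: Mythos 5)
Your proof is correct, and for the converse direction of part~\eqref{it:irred_q} it takes a genuinely different route than the paper. The paper, having reduced to a factorization $\norm\qpq = \pq_1\pq_2$ with both factors non-constant and neither dividing $\qpq$ or $\conj{\qpq}$ (by triviality of the maximal central factor), applies an explicit formula of Ore involving $\lclm(\pq_2, \conj{\qpq})\cdot\conj{\qpq}^{-1}$ to conclude that $\pq_2$ right-divides $\qpq$, a contradiction. You instead refine the factorization so that $\pp_1$ is irreducible in $\Kx$, apply Proposition~\ref{prop:factorization_of_irreducible} to split into two subcases, and dispose of each with different tools: when $\pp_1$ stays irreducible in $\Ax$, you use that $\Ax/\ideal{\pp_1}\cong\CA\otimes\L_1$ is a division ring (by Proposition~\ref{prop:ireeducibility_criterium}), so that $\pp_1\mid\qpq\conj{\qpq}$ forces $\pp_1\mid\qpq$ or $\pp_1\mid\conj{\qpq}$, contradicting the trivial central factor (and its conjugation-invariance, which you correctly note); when $\pp_1 = \qpp'\conj{\qpp'}$, you compare lengths of the two irreducible factorizations of $\qpq\conj{\qpq}$ via Ore's length-uniqueness theorem. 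Both arguments pivot on the trivial-central-factor hypothesis, but yours replaces the lclm computation with a structural dichotomy and, in the first subcase, makes explicit a primality-of-central-irreducibles fact (that $\pp_1$ is prime in $\Ax$ exactly when $\CA\otimes\L_1$ is non-split) that the paper uses only implicitly. The trade-off is a case split and reliance on Proposition~\ref{prop:factorization_of_irreducible}, but the payoff is that the only piece of Ore's theory you need is length invariance rather than the finer lclm identity from \cite[p.~494]{Ore1933}.
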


\begin{proof}
The first assertion follows immediately from Proposition~\ref{prop:ireeducibility_criterium}, while the third one is trivial. We must prove only assertion~\eqref{it:irred_q}. Without loss of generality, we can assume that $\qpp$ is monic. Hence, $\qpp = \qpq$. If $\qpp$ is reducible, say $\qpp = \qpp_1\qpp_2$ then $\norm\qpq = \norm{\qpp_1}\cdot \norm{\qpp_2}$ is reducible in~$\Kx$, too. This proves one implication. Now, suppose that $\qpp$ is irreducible but $\norm\qpq$ has a factorization $\norm\qpq = \pq_1\pq_2$ for some non-constant polynomials $\pq_1, \pq_2\in \Kx$. The maximal central factor of~$\qpp$ is trivial by assumption. Therefore, neither $\qpp$ nor $\conj{\qpp}$ is divisible by any of the polynomials $\pq_1, \pq_2$. But we have $\qpp\conj{\qpp} = \norm\qpq = \pq_1\pq_2$. Therefore, \cite[p. 494]{Ore1933} says that $\qpp$ is right-divisible by $\lclm(\pq_2, \conj{\qpp})\cdot \conj{\qpp}^-1 = \pq_2$. This contradiction implies that $\norm\qpq$ must be irreducible, proving assertion~\eqref{it:irred_q}.
\end{proof}

\section{Factorization of central polynomials}\label{sec:central}
Our first task is to factor in $\Ax$ an irreducible polynomial~$p$ with coefficients in the base field~$\K$. Before we consider this problem in its full generality, we first deal with a special case. If there is a maximal (commutative) subfield $\K_*$ of~$\CA$ such that $\K_*$ embeds into $\L := \quo{\Kx}{\ideal\pp}$, then a factorization of~$\pp$ in $\Ax$ is exceptionally easy.

\begin{lem}\label{lem:K*_into_L}
Let $\CA$ be a division quaternion algebra over a number field~$\K$ and $\pp\in \Ax$ be a monic irreducible polynomial of degree $\deg\pp > 1$. Let $\L := \quo{\Kx}{\ideal\pp}$. If there is a maximal commutative subfield $\K_*$ of~$\CA$ such that $\pp = \norm[\ext{\K_*}{\K}]{(\pq)}$ for some polynomial $\pq\in \K_*[x]$, then
\begin{enumerate}
\item $\K_*$ embeds into~$\L$; 
\item $\pp$ has a non-trivial factorization in~$\Ax$;
\item the quaternion algebra $\CA\otimes\L$ is split.
\end{enumerate}
\end{lem}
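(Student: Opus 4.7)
Since $\K_*$ is a maximal commutative subfield of the division quaternion algebra~$\CA$, we have $[\K_*:\K]=2$; denote by $\sigma$ the non-trivial element of $\operatorname{Gal}(\ext{\K_*}{\K})$. Extending $\sigma$ coefficient-wise to $\K_*[x]$, the hypothesis reads $\pp = \pq\cdot \sigma(\pq)$, so $\deg\pp = 2\deg\pq$; in particular $\deg\pq \ge 1$ because $\deg\pp > 1$.

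For assertion~(1), the plan is to exhibit a copy of $\K_*$ inside~$\L$ by root adjunction. I would pick a root $\beta$ of $\pq$ in an algebraic closure. Since $\pq$ divides $\pp$ in $\K_*[x]$, the element $\beta$ is also a root of $\pp$; as $\pp$ is $\K$-irreducible, the field $\K(\beta)$ is a root field of~$\pp$, and so $\K(\beta)\cong \L$. The minimal polynomial of $\beta$ over $\K_*$ divides $\pq$, hence $[\K_*(\beta):\K_*]\le \deg\pq$, which yields $[\K_*(\beta):\K]\le 2\deg\pq = \deg\pp = [\K(\beta):\K]$. Combined with the obvious inclusion $\K(\beta)\subseteq \K_*(\beta)$, this forces $\K_*(\beta) = \K(\beta)$, whence $\K_* \subseteq \K(\beta) \cong \L$.

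Assertion~(2) is then immediate: since $\K_*$ is commutative, the natural map $\K_*[x]\hookrightarrow \Ax$ is a ring embedding, and the identity $\pp = \pq\cdot \sigma(\pq)$ persists as a factorization in~$\Ax$; because $1\le \deg\pq < \deg\pp$, both factors are non-constant. (Alternatively, (2) can be deduced directly from~(3) via Proposition~\ref{prop:ireeducibility_criterium}.) For~(3), a maximal subfield of a quaternion division algebra is always a splitting field, so $\CA\otimes_\K \K_* \cong M_2\K_*$; combining this with the $\K$-algebra embedding $\K_*\hookrightarrow \L$ from~(1) and transitivity of tensor products gives
\[
\CA\otimes_\K \L \;\cong\; (\CA\otimes_\K \K_*)\otimes_{\K_*}\L \;\cong\; M_2\K_*\otimes_{\K_*}\L \;\cong\; M_2\L,
\]
so $\CA\otimes_\K \L$ is split.

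The principal obstacle is the degree comparison in~(1): the estimate $[\K_*(\beta):\K]\le [\K(\beta):\K]$ is the only step that is not purely formal, and it depends crucially on the assumption that $\pq$ has degree exactly $\deg\pp/2$. Everything else reduces to routine facts: commutativity of $\K_*[x]$ for~(2), and the standard base-change behavior of maximal splitting subfields for~(3).
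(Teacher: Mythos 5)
Your proof is correct, and for assertion~(1) you take a genuinely different route from the paper. The paper writes $\K_* = \K(\sqrt d)$, expands $\pq = \pq_0 + \pq_1\sqrt d$, obtains $\pp = \pq_0^2 - d\,\pq_1^2$, reduces modulo $\ideal{\pp}$, and solves for $d$ as a square in $\L$ (this requires the minor but necessary observation that $\pq_1\not\equiv 0\pmod{\pp}$, which follows from $\deg\pq_1<\deg\pp$ and the irreducibility of~$\pp$). You instead adjoin a root $\beta$ of $\pq$, identify $\K(\beta)\cong\L$, and force $\K_*(\beta)=\K(\beta)$ by comparing degrees; this avoids the $\sqrt d$ coordinates entirely and also sidesteps the non-vanishing check, at the cost of working in an algebraic closure rather than directly in $\L$. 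For~(2) your argument is the same as the paper's (the factorization $\pp=\pq\cdot\sigma(\pq)$ persists in $\K_*[x]\subset\Ax$). For~(3) the paper simply invokes Proposition~\ref{prop:ireeducibility_criterium}, which you mention as an alternative, but your primary argument — $\CA\otimes_\K\L\cong(\CA\otimes_\K\K_*)\otimes_{\K_*}\L\cong M_2\K_*\otimes_{\K_*}\L\cong M_2\L$ — is a self-contained base-change computation and does not lean on the proposition; this makes~(3) independent of~(2), whereas the paper deduces~(3) from~(2). Both proofs are sound; yours is slightly more structural, the paper's slightly more explicit (and closer in spirit to how Algorithm~\ref{alg:subfield} actually operates).
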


\begin{proof}
Let $\K_* = \K(\sqrt{d})$ be such that $\pp = \norm[\ext{\K_*}{\K}]{(\pq)}$ for some $\pq \in \K_*[x]$. Write $\pq = \pq_0 + \pq_1\sqrt{d}$. Then $\pp = \pq_0^2 - \pq_1^2 d$. Let $a_i := \pq_i + \ideal\pp\in \L$, for $i \in \{0,1\}$. We have 
\[
0 = a_0^2 - a_1^2 d
\]
and so $d$ is a square in~$\L$. This means that $\K_*$ embeds into~$\L$. Now, $\pp = (\pq_0 + \pq_1\sqrt{d})(\pq_0 - \pq_1\sqrt{d})$, where $\pq_0, \pq_1\in \K_*[x]\subset \Ax$. Therefore, $p$ has a non-trivial factorization in $\Ax$. The third assertion follows now from Proposition~\ref{prop:ireeducibility_criterium}.
\end{proof}
 
\begin{alg}\label{alg:subfield}
Let $\CA$ be a division quaternion algebra over a number field~$\K$, and let $\pp\in \Kx$ be a monic irreducible polynomial. If there is a maximal commutative subfield of~$\CA$  over which $\pp$ factors, then this algorithm outputs a factorization of~$\pp$ over~$\CA$. Otherwise, it reports a failure.
\begin{enumerate}
\item Build a field extension $\L := \quo{\Kx}{\ideal\pp}$.
\item\label{st:subfield:subfields} Construct all the subfields $\L_1, \dotsc, \L_n$ of~$\L$ of degree $2$ over~$\K$ \textup(see Remark~\ref{rem:subfields} below\textup).
\item Repeat the following steps for every $\L_i$:
  \begin{enumerate}
   \item Check if the quaternion algebra $\CA\otimes \L_i$ splits. If it does not, then reiterate the loop. 
    \item Find an embedding $\varphi: \L_i\hookrightarrow \CA$ \textup(see Remark~\ref{rem:embedding} below\textup) and let $\K_* := \varphi(\L_i)$.
    \item Factor $\pp$ in $\K_*[x]$ into the product $\pp = \qpq\cdot \conj{\qpq}$, where $\qpq\in \K_*[x] \subset \Ax$.
    \item\label{st:subfield:success} Output $( \qpq, \conj{\qpq})$ and quit.
  \end{enumerate}
\item Report a failure.
\end{enumerate}
\end{alg}

\begin{poc}
Let $\K_*$ is a maximal (commutative) subfield of~$\K$ such that $\pp$ has a non-trivial factorization in $\K_*[x]$, then $(\K_*: \K) = 2$  and so $\K_*$ embeds into~$\L$ by Lemma~\ref{lem:K*_into_L}. Hence, $\K_*$ is isomorphic to one of the fields $\L_1, \dotsc, \L_n$ constructed in step~\eqref{st:subfield:subfields} of the algorithm. Therefore, a proper factorization of~$\pp$ is returned in step~\eqref{st:subfield:success}. Conversely, assume that $\L_i$ is a subfield of~$\L$, quadratic over~$\K$, and such that $\CA\otimes \L_i$ is split. Then, \cite[Theoreme~I.2.8]{Vigneras1980} asserts that $\L_i$ embeds into~$\CA$. Hence, up to an isomorphism, it is a maximal subfield of~$\CA$.
\end{poc}

\begin{rem}\label{rem:subfields}
The problem of finding all the subfields of some fixed degree in a given field extension $\ext{\L}{\K}$ is an active area of research. Algorithms for solving this problem can be found in \cite{EK2019,Kluners1998,KP1997,SvH2019,vHKN2011,vHKN2013}.
\end{rem}

\begin{rem}\label{rem:embedding}
Let $M$ be a quadratic extension of~$\K$ such that $\CA\otimes M$ splits. Fix an element $d \in \K$, which is a square in~$M$ but not in~$\K$. Then $M = \K(\sqrt{d})$. Now, \cite[Theoreme~I.2.8]{Vigneras1980} asserts that $M$ embeds in~$\CA$. Therefore, $d$ is a square of some pure quaternion $\qa\in \CA$. Then $a + b\sqrt{d}\mapsto a + b\qa$ is an embedding of~$M$ into~$\CA$. The quaternion~$\qa$ can be directly constructed by solving two norm equations (see, e.g., \cite{Koprowski2023}).
\end{rem}

\begin{rem}\label{rem:factorization}
All the algorithms presented in this paper rely on our ability to factor polynomials over number fields, either the base field~$\K$ or its quadratic extension~$\K_*$. Fortunately, the factorization of polynomials over number fields is a well-studied subject. We may refer the reader to \cite{Ayad2010,BHKS2009,Landau1985,Lenstra1983,MOP2005,Roblot2004}.
\end{rem}

We will now develop the main procedure of this section that takes a polynomial $\pp\in \Kx$, irreducible in every commutative subfield of~$\CA$, and constructs a quaternionic polynomial~$\qpp$ such that $\pp = \qpp\conj{\qpp}$. The general idea is to use the fact (see Proposition~\ref{prop:ireeducibility_criterium}) that the existence of a non-trivial factorization of~$\pp$ in $\Ax$ implies that $\CA\otimes \L$ splits, where as before $\L := \quo{\Kx}{\ideal\pp}$. We can then use a zero-divisor of $\CA\otimes\L$ to find a polynomial $\qpp\in \Ax$ such that $\qpp\conj{\qpp}  = \pp\pq$ for some extraneous factor~$\pq$. This can be thought of as an ``approximate factorization'' of~$\pp$. We will then successively improve this ``approximation'', reducing the degree of~$\pq$. First, we show some preliminary lemmas.

\begin{lem}\label{lem:g_divides_pr}
Let $\pp, \pq\in \Kx$ and $\qpp\in \Ax$ be polynomials and let $\qpr\in \Ax$ be the reminder of~$\qpp$ modulo~$\pq$. If $\pp\pq = \qpp\conj{\qpp}$, then $\pq$ divides the polynomials $\qpr\conj{\qpr}$ and $\qpp\conj{\qpr}$.
\end{lem}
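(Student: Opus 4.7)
The plan is to exploit the fact that $\pq\in\Kx$ lies in the center $Z(\Ax)$ (since $\K = Z(\CA)$) and is moreover fixed by the standard involution, because its coefficients all satisfy $\conj{a}=a$ for $a\in\K$. This means right-division of $\qpp$ by $\pq$ behaves as nicely as division by a scalar polynomial, and conjugation of the division identity produces another useful identity without swapping the roles of quotient and remainder awkwardly.

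Concretely, I would first write the right division $\qpp = \qpq\pq + \qpr$ for some $\qpq\in\Ax$ with $\deg\qpr < \deg\pq$. Applying the standard involution termwise, and using $\conj{\pq}=\pq$ together with the fact that $\pq$ commutes with $\conj{\qpq}$, this yields
\[
\conj{\qpp} = \pq\conj{\qpq} + \conj{\qpr}.
\]
Multiplying the two expressions and pulling the central factor $\pq$ to the front in every term that contains it, I get
\[
\qpp\conj{\qpp} = \pq\bigl(\pq\qpq\conj{\qpq} + \qpq\conj{\qpr} + \qpr\conj{\qpq}\bigr) + \qpr\conj{\qpr}.
\]

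Now I would substitute the hypothesis $\qpp\conj{\qpp} = \pp\pq$ and rearrange to isolate $\qpr\conj{\qpr}$ on one side; what remains on the other side is a left multiple of $\pq$ by a quaternionic polynomial, so $\pq$ divides $\qpr\conj{\qpr}$ in $\Ax$ (equivalently, in $\Kx$, since $\qpr\conj{\qpr}$ is central by construction). For the second divisibility I would just compute
\[
\qpp\conj{\qpr} = (\qpq\pq + \qpr)\conj{\qpr} = \pq\cdot\qpq\conj{\qpr} + \qpr\conj{\qpr},
\]
in which the first summand is visibly divisible by $\pq$ and the second is divisible by $\pq$ by the step just done.

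I do not anticipate any real obstacle: the entire argument is bookkeeping with non-commutative multiplication, and the one subtle point is remembering that $\pq$ is central (so I may freely move it past quaternionic factors and across the involution). The only place where something could go wrong is conjugating $\qpq\pq$; writing it out as $\conj{\pq}\,\conj{\qpq} = \pq\conj{\qpq}$ settles that.
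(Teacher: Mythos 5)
Your proof is correct and follows essentially the same route as the paper's: write out the right division $\qpp = \qpq\pq + \qpr$, exploit the centrality (and involution-invariance) of $\pq\in\Kx$ to expand a norm and isolate $\qpr\conj{\qpr}$ as $\pp\pq$ minus a visible multiple of $\pq$, then deduce the second divisibility from the first by the short computation $\qpp\conj{\qpr} = \pq\,\qpq\conj{\qpr} + \qpr\conj{\qpr}$. The only cosmetic difference is that the paper expands $\qpr\conj{\qpr} = (\qpp - \qpq\pq)\conj{(\qpp - \qpq\pq)}$ directly, whereas you expand $\qpp\conj{\qpp}$ from the division identity and solve for $\qpr\conj{\qpr}$; the underlying algebra is identical.
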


\begin{proof} 
We have $\qpp = \qpq\cdot \pq + \qpr$ for some quaternionic polynomial $\qpq\in \Ax$. Compute 
\begin{align*}
\qpr\conj{\qpr}
&= ( \qpp - \qpq\cdot \pq)( \conj{\qpp - \qpq\cdot \pq} )\\
&= \qpp\conj{\qpp} - \qpp\conj{\qpq}\cdot \pq - \qpq\conj{\qpp}\cdot \pq + \qpq\conj{\qpq}\cdot \pq^2\\
&= \pp\pq - \qpp\conj{\qpq}\cdot \pq - \qpq\conj{\qpp}\cdot \pq + \qpq\conj{\qpq}\cdot \pq^2\\
&= (\pp - \qpp\conj{\qpq} - \qpq\conj{\qpp} + \qpq\conj\qpq\cdot \pq)\cdot \pq.
\end{align*}
This proves the first assertion. For the second one, write 
\[
\qpp\conj{\qpr} 
= (\qpq\cdot \pq + \qpr)\cdot \conj{\qpr}
= \qpq\conj{\qpr}\cdot \pq + \qpr\conj{\qpr}.
\]
Now, it follows from the previous part of the proof that the right-hand side of the above formula is divisible by~$\pq$ and so is $\qpp\conj{\qpr}$, as claimed.
\end{proof}

\begin{lem}\label{lem:f_divides_qq}
Keep the assumptions of the previous lemma and let $\qpq := \qpp\conj{\qpr}\cdot \pq^{-1}\in \Ax$. Then $\pp$ divides $\qpq\conj{\qpq}$.
\end{lem}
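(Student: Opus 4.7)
The plan is to compute $\qpq\conj{\qpq}$ directly, and then substitute both the hypothesis $\qpp\conj{\qpp} = \pp\pq$ and the divisibility facts supplied by Lemma~\ref{lem:g_divides_pr}. The two structural observations I would keep in mind throughout are that $\pq\in \Kx$ is central and self-conjugate, and that for any $\qpr\in \Ax$ the norm $\qpr\conj{\qpr} = \conj{\qpr}\qpr$ again lies in $\Kx$ and is therefore central in $\quat{\gena}{\genb}{\K(x)}$. Note also that $\qpq$ really is a polynomial, not merely a rational quaternionic function: this is the content of the second assertion of Lemma~\ref{lem:g_divides_pr}.

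First I would form $\conj{\qpq}$. Since the standard involution reverses products and fixes $\pq$, it follows that $\conj{\qpq} = \pq^{-1}\cdot \qpr\cdot \conj{\qpp}$. Multiplying through and collecting the central factor $\pq^{-2}$ gives
\[
\qpq\conj{\qpq} = \pq^{-2}\cdot \qpp\cdot (\conj{\qpr}\qpr)\cdot \conj{\qpp}.
\]
Next, using that $\conj{\qpr}\qpr$ is central, I would commute it past $\conj{\qpp}$ and then invoke the hypothesis $\qpp\conj{\qpp} = \pp\pq$, which reduces the expression to
\[
\qpq\conj{\qpq} = \pp\cdot \frac{\qpr\conj{\qpr}}{\pq}.
\]
At this point Lemma~\ref{lem:g_divides_pr} finishes the job: it guarantees that $\pq$ divides $\qpr\conj{\qpr}$ in $\Kx$, so the right-hand factor is an honest element of $\Kx\subset \Ax$, and hence $\pp$ divides $\qpq\conj{\qpq}$, as required.

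There is no real obstacle beyond the non-commutative bookkeeping. One must be careful to apply $\conj{\qa\qb}=\conj\qb\conj\qa$ in the correct order and, more importantly, to justify every instance of commutation by explicitly pointing to the centrality of either $\pq$ or the norm $\qpr\conj{\qpr}$. With those two central-factor observations isolated up front, the remaining argument is a short algebraic manipulation.
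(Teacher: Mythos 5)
Your proof is correct and follows the same route as the paper's: expand $\qpq\conj{\qpq}$, commute the central norm $\qpr\conj{\qpr}$ past $\conj{\qpp}$, substitute $\qpp\conj{\qpp} = \pp\pq$, and invoke Lemma~\ref{lem:g_divides_pr} to see that $\qpr\conj{\qpr}\cdot\pq^{-1}\in\Kx$. The only addition is your explicit remark that $\qpq\in\Ax$ follows from the second assertion of Lemma~\ref{lem:g_divides_pr}, which the paper leaves implicit.
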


\begin{proof}
Compute
\begin{align*}
\qpq\conj{\qpq}
&= (\qpp\conj{\qpr}\cdot \pq^{-1})\conj{(\qpp\conj{\qpr}\cdot \pq^{-1})}\\
&= \qpp\conj{\qpr}\qpr\conj{\qpp}\cdot \pq^{-2}\\
&= \qpp\conj{\qpp}\cdot \qpr\conj{\qpr}\cdot \pq^{-2}\\
&= \pp\pq\cdot \qpr\conj{\qpr}\cdot \pq^{-2}\\
&= (\qpr\conj{\qpr}\cdot \pq^{-1}) \cdot \pp.
\end{align*}
The assertion follows now from the preceding lemma.
\end{proof}

\begin{lem}\label{lem:degree_reduction}
Keep the assumptions of Lemmas~\ref{lem:g_divides_pr} and \ref{lem:f_divides_qq}. Denote $\pr := \qpq\conj{\qpq}\cdot \pp^{-1}\in \Kx$. Then $\deg \pr < \deg \pq$.
\end{lem}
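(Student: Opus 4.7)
The plan is to exploit the intermediate identity that already fell out of the computation in Lemma~\ref{lem:f_divides_qq}. Inspecting the string of equalities there, the penultimate line reads $\qpq\conj{\qpq} = (\qpr\conj{\qpr}\cdot \pq^{-1})\cdot \pp$, so dividing both sides by~$\pp$ gives the very clean formula
\[
\pr = \qpq\conj{\qpq}\cdot \pp^{-1} = \qpr\conj{\qpr}\cdot \pq^{-1}.
\]
Thus the whole assertion reduces to a degree comparison for $\qpr\conj{\qpr}$ versus $\pq^2$.

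Next, I would observe that by the definition of the right-hand remainder, $\deg \qpr < \deg \pq$ (with the convention $\deg 0 = -\infty$, which handles the trivial case $\qpr = 0$ at once, as then $\pr = 0$). It remains to show that $\deg(\qpr\conj{\qpr}) = 2\deg \qpr$, because then
\[
\deg \pr = \deg(\qpr\conj{\qpr}) - \deg \pq = 2\deg \qpr - \deg \pq < 2\deg \pq - \deg \pq = \deg \pq,
\]
which is exactly what is claimed.

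The only substantive point, therefore, is that the norm of a nonzero quaternionic polynomial has exactly twice the degree of the polynomial. Writing $\qpr = \qa_n x^n + (\text{lower terms})$ with $\qa_n \neq 0$, and noting that the indeterminate~$x$ commutes with the quaternionic coefficients, the leading coefficient of $\qpr\conj{\qpr}$ is $\qa_n\conj{\qa_n} = \norm{\qa_n}$. Because $\CA$ is a \emph{division} quaternion algebra, $\norm{\qa_n}\neq 0$ whenever $\qa_n\neq 0$, so no cancellation occurs and indeed $\deg(\qpr\conj{\qpr}) = 2n = 2\deg \qpr$. This last step is the only place where the division hypothesis on~$\CA$ is used, and it is the one spot to handle carefully; everything else is bookkeeping on top of the computation already carried out in Lemma~\ref{lem:f_divides_qq}.
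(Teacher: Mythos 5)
Your proof is correct, and it streamlines the paper's argument in a pleasant way. Both proofs ultimately arrive at the same key equation $\deg \pr = 2\deg \qpr - \deg \pq$ and then invoke $\deg\qpr < \deg\pq$. The difference is how you get there: the paper chains four separate degree identities (for $\pr$, $\qpq$, $\qpp$, and $\qpr$) and telescopes them algebraically, whereas you extract a single clean polynomial identity $\pr = \qpr\conj{\qpr}\cdot \pq^{-1}$ directly from the penultimate line of the computation in Lemma~\ref{lem:f_divides_qq} and read off the degree at once. Your version is shorter and more transparent, and it avoids having to track $\deg\qpp$ and $\deg\qpq$ at all. You also make explicit two points the paper leaves silent: that the degree-doubling identity $\deg(\qpr\conj{\qpr}) = 2\deg\qpr$ rests on $\CA$ being a division algebra (so that the leading coefficient's norm does not vanish) — this is in fact needed implicitly three separate times in the paper's chain — and that the edge case $\qpr = 0$ is absorbed by the $\deg 0 = -\infty$ convention. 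Both are worthwhile clarifications.
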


\begin{proof}
We have the following relations between the degrees of the polynomials in question:
\begin{itemize}
\item $\deg \pr = 2\cdot \deg \qpq - \deg \pp$, since $\pr = \qpq\conj{\qpq}\cdot \pp^{-1}$;
\item $\deg \qpq = \deg \qpp + \deg \qpr - \deg \pq$, since $\qpq = \qpp\conj{\qpr}\cdot \pq^{-1}$;
\item $2\cdot \deg \qpp = \deg \pp + \deg \pq$, since $\pp\pq = \qpp\conj{\qpp}$;
\item $\deg \qpr < \deg \pq$, since $\qpr$ is the reminder of~$\qpp$ modulo~$\pq$.
\end{itemize}
Combining these four conditions, we obtain the following:
\begin{align*}
\deg \pr 
&= 2\deg \qpq - \deg \pp\\ 
&= 2\deg \qpp + 2\deg \qpr - 2\deg \pq - \deg \pp\\
&= \deg \pp + \deg \pq + 2\deg \qpr - 2\deg \pq - \deg \pp\\
&= 2\deg \qpr - \deg \pq\\
&< 2\deg \pq - \deg \pq = \deg \pq.\qedhere
\end{align*}
\end{proof}

We are now ready to present an algorithm that factors in $\Ax$ an irreducible polynomial~$\pp$ with coefficients in~$\K$.

\begin{alg}\label{alg:factor_irreducible_scalar_poly}
Let $\CA = \Quat$ be a division quaternion algebra over a number field~$\K$. Given a monic irreducible polynomial $\pp\in \Kx$, this algorithm outputs its irreducible factors in $\Ax$.
\begin{enumerate}
\item Construct the field extension $\L = \quo{\Kx}{\ideal{\pp}}$.
\item\label{st:fisp:quick_exit} Check if the quaternion algebra $\CA\otimes \L$ splits. If it does not split, output $\pp$ and quit.
\item\label{st:fisp:subfield} Execute Algorithm~\ref{alg:subfield}. If it returns a factorization $(\qpp, \conj{\qpp})$ of~$\pp$, then output $(\qpp, \conj{\qpp})$ and quit.
\item\label{st:fisp:iso_vec} Find a zero-divisor $\qa = a_0 + a_1\qi + a_2\qj + a_3\qk\in \CA\otimes \L$ \textup(see Remark~\ref{rem:iso_vec}\textup).
\item\label{st:fisp:qpp} Let $\qpp\in \Ax$ be an inverse image of~$\qa$ under the natural homomorphism $\Ax\to \CA\otimes \quo{\Kx}{\ideal\pp}$, i.e. $\qpp := \pq_0 + \pq_1\cdot \qi + \pq_2\cdot \qj + \pq_3\cdot \qk$, where $\pq_0, \dotsc, \pq_3\in \Kx$ are polynomials such that $\pq_i + \ideal{\pp} = a_i$ for $i \leq 3$.
\item Set $\pq := \qpp\conj{\qpp}\cdot \pp^{-1}$.
\item While $\pq$ is non-constant do: 
  \begin{enumerate}
  \item Compute the reminder~$\qpr$ of~$\qpp$ modulo~$\pq$. 
  \item\label{st:fisp:update_p} Update $\qpp$ setting $\qpp := \qpp\conj{\qpr}\cdot \pq^{-1}$.
  \item\label{st:fisp:update_g} Update $\pq$ setting $\pq := \qpp\conj{\qpp}\cdot \pp^{-1}$.
  \end{enumerate}
\item Let $\qc := \lc \qpp$ be the leading coefficient of~$\qpp$.
\item Output $(\qpp\cdot \qc^{-1}, \conj{\qc}^{-1}\cdot \conj{\qpp})$.
\end{enumerate}
\end{alg}

\begin{poc}
Proposition~\ref{prop:ireeducibility_criterium} asserts that $\pp$ is irreducible in $\Ax$ if and only if $\CA\otimes\L$ does not split. This proves the correctness of step~\eqref{st:fisp:quick_exit}. Assume that the algorithm does not terminate at step~\eqref{st:fisp:quick_exit} or~\eqref{st:fisp:subfield}. This means that $\CA\otimes \L$ splits (i.e. it is isomorphic to $M_2\L$) and so it contains a zero-divisor $\qa = a_0 + a_1\qi + a_2\qj + a_3\qk$. Now, $\qa$ is a zero-divisor if and only if its norm vanishes (see, e.g., \cite[Exercise~3.7]{Voight2021}):
\[
\norm\qa = a_0^2 - \gena a_1^2 - \genb a_2^2 + \gena\genb a_3^2 = 0.
\]
Let the polynomial $\qpp\in \Ax$ be defined as in step~\eqref{st:fisp:qpp}. The previous formula reads now as
\[
\qpp\conj{\qpp} + \ideal{\pp}
= \pq_0^2 - \gena\pq_1^2 - \genb\pq_2^2 + \gena\genb\pq_3^2 + \ideal{\pp}
= 0.
\]
Therefore, $\pp$ divides $\qpp\conj{\qpp}$ and so $\pq$ is a well defined polynomial satisfying the condition $\pp\pq = \qpp\conj{\qpp}$. Lemma~\ref{lem:g_divides_pr} says that $\pq$ divides $\qpp\conj{\qpr}$, where $\qpr$ is the reminder of~$\qpp$ modulo~$\pq$. Hence, updating $\qpp$ in step~\eqref{st:fisp:update_p}, we again obtain a (quaternionic) polynomial. It follows from Lemma~\ref{lem:f_divides_qq} that $\pp$ divides the norm $\norm{\qpp} = \qpp\conj{\qpp}$ of the updated polynomial~$\qpp$. Consequently, after updating $\pq$ in step~\eqref{st:fisp:update_g} once again, we have a polynomial, and it clearly satisfies the condition $\pp\pq = \qpp\conj{\qpp}$. Further, Lemma~\ref{lem:degree_reduction} asserts that the degree of~$\pq$ diminishes after every iteration of the loop. It follows that the loop stops after finitely many iterations, and so the algorithm terminates. After the loop is finished we have $\pp\pq = \qpp\conj{\qpp}$ where $\pq$ is a constant. Since $\pp$ is monic by assumption, the leading coefficient of the left-hand side is $\lc( \pp\pq ) = \pq\in \K$, while the leading coefficient of the right-hand side equals $\lc(\qpp)\cdot \lc(\conj{\qpp}) = \qc\conj{\qc}$. Therefore, $\qc\conj{\qc} = \pq$ and so $\pp = \qpp\cdot \pq^{-1}\cdot \conj{\qpp} = (\qpp\cdot \qc^{-1}) (\conj{\qc}^{-1}\cdot \conj{\qpp})$ is a factorization of~$\pp$.
\end{poc}

\begin{rem}\label{rem:iso_vec}
It is well known (see \cite[Section~4]{Voight2013}) that finding a zero-divisor in a (necessarily split) quaternion algebra $\quat{\gena}{\genb}{\L}$ is equivalent to finding an $\L$-rational point on a projective conic curve given by $\gena x^2 + \genb y^2 - \gena\genb z^2 = 0$. This can be done either by a variant of a Lagrange descent (see \cite{CR2003, vHC2006} and \cite[\S127.5]{CBFS2021}) or by solving a norm equation (see \cite{Koprowski2021}). In particular, it is always possible to find a zero-divisor which is a pure quaternion, i.e. it has a form $\qa = a_1\qi + a_2\qj + a_3\qk$.
\end{rem}

\begin{exm}\label{exm:fisp}
We will illustrate how the above algorithm works using the following toy example. Fix the quaternion algebra $\CA := \quat{-1}{-1}{\QQ}$ over the rationals and let
\[
\pp = x^4 + 11x^2 + 16x + 6.
\]
The field $\L = \quo{\Kx}{\ideal{\pp}}$ splits~$\CA$, hence $\pp$ factors in~$\Ax$. The polynomial~$\qpp$ constructed in step~\eqref{st:fisp:qpp} has a form $\qpp = 0 +  \pq_1\qi + \pq_2\qj + \pq_3\qk$, where:
\begin{align*}
\pq_1 &= 19x^3 - 12x^2 + 211x + 154,\\
\pq_2 &= 13x^3 - 11x^2 + 136x + 97,\\
\pq_3 &= 53.
\end{align*}
It follows that initally the polynomial $\pq = \qpp\conj{\qpp}\cdot \pp^{-1}$ equals
\[
\pq = 530x^2 - 742x + 5989.
\]
In this particular example, the loop is executed only once, and after it stops, we have 
\[
\qpp = 
\left(\frac{29}{50} + \frac{11}{25}\qk\right)\cdot x^2 + \left(\frac{11}{25} - \frac{13}{10}\qi + \frac{19}{10}\qj - \frac{29}{50}\qk\right)\cdot x + \frac{11}{25} - \frac{18}{25}\qi + \frac{73}{50}\qj - \frac{29}{50}\qk.
\]
Dividing $\qpp$ on the right by its leading coefficient, we obtain the sought factor of~$\pp$:
\[
\qpq_0 := \qpp\cdot \lc(\qpp)^{-1} 
= x^2 - (3\qi - \qj + \qk)\cdot x - 2\qi + \qj - \qk
\]
It can be verified by a direct computation that $\pp = \qpq_0\conj{\qpq_0}$.
\end{exm}

\section{Factoring general polynomials}
We will now present an algorithm that factors an arbitrary quaternionic polynomial. First, we prove the following auxiliary lemma that may, to some extent, mitigate inconveniences caused by the lack of commutativity.

\begin{lem}\label{lem:semicommutativity}
Let $\qpp, \qpq\in \Ax$ be irreducible monic polynomials. If the norm $\norm\qpp$ and $\norm\qpq$ are relatively prime in $\Kx$, then there are irreducible polynomials $\qpp_1, \qpq_1\in \Ax$ such that $\norm \qpp_1 = \norm \qpp$, $\norm \qpq_1 = \norm \qpq$ and $\qpp\qpq = \qpq_1\qpp_1$.
\end{lem}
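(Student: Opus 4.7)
The plan is to construct the swapped factorization by locating a right divisor of $\qpp\qpq$ of the correct similarity class. Concretely, if I can find a monic right factor $\qpp_1$ of $\qpp\qpq$ with $\norm\qpp_1 = \norm\qpp$, then writing $\qpp\qpq = \qpq_1\qpp_1$ and comparing norms forces $\norm\qpq_1 = \norm\qpq$, while irreducibility of both factors follows from Ore's theorem on lengths of factorisations combined with equality of degrees. Existence of such a $\qpp_1$ I would prove by analysing the left $\Ax$-module $M := \Ax/\Ax\qpp\qpq$.

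The cornerstone is a mixed B\'ezout identity, $\qpq\Ax + \Ax\qpp = \Ax$. I would observe that $\norm\qpp = \conj\qpp\qpp$ lies in $\Ax\qpp$ and $\norm\qpq = \qpq\conj\qpq$ lies in $\qpq\Ax$, so both norms sit in the additive subgroup $\qpq\Ax + \Ax\qpp$; since $\Kx$ lies in the centre of $\Ax$, this subgroup is closed under multiplication by $\Kx$, and commutative B\'ezout in $\Kx$ gives $a, b \in \Kx$ with $a\norm\qpp + b\norm\qpq = 1$, placing $1$ in $\qpq\Ax + \Ax\qpp$. For the module step, $M$ carries the canonical filtration $0 \subset \Ax\qpq/\Ax\qpp\qpq \subset M$; the map $r\qpq + \Ax\qpp\qpq \mapsto r + \Ax\qpp$ identifies the bottom subquotient with $\Ax/\Ax\qpp$ (using that $\qpq$ is not a zero divisor in $\Ax$) and the top quotient is $\Ax/\Ax\qpq$. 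Computing $\mathrm{Ext}^1_{\Ax}(\Ax/\Ax\qpq, \Ax/\Ax\qpp)$ from the projective resolution given by right multiplication by $\qpq$, $0 \to \Ax \to \Ax \to \Ax/\Ax\qpq \to 0$, yields the cokernel of left multiplication by $\qpq$ on $\Ax/\Ax\qpp$, namely $\Ax/(\qpq\Ax + \Ax\qpp)$, which vanishes by the mixed B\'ezout; so the extension splits.

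Choosing a complementary sub-left-module $B \cong \Ax/\Ax\qpq$ of $M$, the principal-ideal structure forces $B = \Ax\qpp_1/\Ax\qpp\qpq$ for a monic right divisor $\qpp_1$ of $\qpp\qpq$; writing $\qpp\qpq = \qpq_1\qpp_1$, a short cancellation argument identifies the annihilator of $\qpp_1 + \Ax\qpp\qpq$ with $\Ax\qpq_1$, giving $B \cong \Ax/\Ax\qpq_1$ and $M/B \cong \Ax/\Ax\qpp_1$. Matching against the direct-sum decomposition forces $\Ax/\Ax\qpp_1 \cong \Ax/\Ax\qpp$ and $\Ax/\Ax\qpq_1 \cong \Ax/\Ax\qpq$ as left modules, and the classical principle that similar monic irreducibles of $\Ax$ with irreducible reduced norms share a common norm yields $\norm\qpp_1 = \norm\qpp$ and $\norm\qpq_1 = \norm\qpq$. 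The principal obstacle is the splitting of the extension: homologically this is just Ext vanishing, but producing a complement explicitly requires delicate work, and the natural candidate is the cyclic submodule generated by the image of $a\norm\qpp$ in $M$, whose correctness is exactly the content of the mixed B\'ezout.
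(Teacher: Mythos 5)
Your approach is genuinely different from the paper's. The paper argues directly with Ore's calculus of least common left multiples: from the commutative B\'ezout relation $\pp\cdot\norm\qpp + 1 = \pq\cdot\norm\qpq$ it reads off that $\qpq$ has a right inverse modulo $\Ax\qpp$, sets $\qpp_1 := \qa\cdot\lclm(\qpp,\qpq_*)\cdot\qpq_*^{-1}$, and then invokes Ore's rearrangement theorem and the equality $\qpp\qpq = \lclm(\qpp_1,\qpq)$ to produce $\qpq_1$. You instead recast the claim as the splitting of the exact sequence $0\to\Ax/\Ax\qpp\to\Ax/\Ax\qpp\qpq\to\Ax/\Ax\qpq\to 0$ and deduce it from the vanishing of $\mathrm{Ext}^1_{\Ax}(\Ax/\Ax\qpq,\Ax/\Ax\qpp)\cong\Ax/(\qpq\Ax+\Ax\qpp)$. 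Your back-end analysis (reading off $\qpp_1$ from a complementary cyclic submodule, computing its annihilator to get $\qpq_1$, and matching $\Kx$-annihilators of the resulting simple modules to equate the norms) is essentially correct, though you tacitly restrict to the case where $\norm\qpp$ and $\norm\qpq$ are irreducible; the degenerate case where $\qpp$ or $\qpq$ is central (so its norm is a square) should be dispatched separately, which is trivial since a central factor commutes with everything.

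There is, however, a genuine gap at the step on which everything hinges. You correctly show $1\in\qpq\Ax+\Ax\qpp$: writing $a\norm\qpp + b\norm\qpq = 1$ with $a,b\in\Kx$ central, using $\norm\qpq=\qpq\conj\qpq\in\qpq\Ax$ and $\norm\qpp=\conj\qpp\qpp\in\Ax\qpp$. But $\qpq\Ax+\Ax\qpp$ is the sum of a right ideal and a left ideal, hence not an ideal of either kind, so containing $1$ does \emph{not} force it to be all of $\Ax$. Left-multiplying $1=\qpq\qc+\qd\qpp$ by an arbitrary $\qa$ yields a term $\qa\qpq\qc\in\Ax\qpq\Ax$, which is not visibly in $\qpq\Ax$; right-multiplying has the symmetric failure. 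So the asserted ``mixed B\'ezout'' $\qpq\Ax+\Ax\qpp=\Ax$, equivalently the vanishing of the $\mathrm{Ext}^1$ group, is not established by what you write, and it is exactly the crux.

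The gap is repairable inside your framework. What you actually need is that left multiplication by $\qpq$ on the left $\Ax$-module $\Ax/\Ax\qpp$ is surjective. Since $\Ax/\Ax\qpp$ is finite-dimensional over $\K$ (of dimension $4\deg\qpp$, by right Euclidean division), it suffices to prove injectivity, and that does follow from your B\'ezout relation together with centrality of reduced norms: if $\qpq\qa\in\Ax\qpp$, then
\[
\qa \;=\; a\norm\qpp\,\qa + b\norm\qpq\,\qa \;=\; (a\qa\conj\qpp)\qpp + b\conj\qpq(\qpq\qa)\;\in\;\Ax\qpp.
\]
With that substitution your argument closes; note the parallel with the paper, which likewise derives from the commutative B\'ezout relation only the existence of a right inverse of $\qpq$ modulo $\Ax\qpp$ and never needs the full equality $\qpq\Ax+\Ax\qpp=\Ax$, because Ore's explicit rearrangement machinery takes over from there.
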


\begin{proof}
The polynomials $\norm\qpp$ and $\norm\qpq$ are relatively prime by assumption. Hence, by B\'ezout identity, there are polynomials $\pp, \pq\in \Kx$ such that
\[
\pp\cdot \norm\qpp + 1 = \pq\cdot \norm\qpq.
\]
Set $\qpp_* := \conj{\qpp}\cdot \pp$ and $\qpq_* := \conj{\qpq}\cdot \pq$. Then
\[
\qpp_*\qpp + 1 = \qpq\qpq_*.
\]
This means that $\qpq$ is right-invertible\footnote{Notice that $\qpq$ is also left-invertible modulo~$\qpp$. However, to prove the latter assertion it suffices to use a weaker asumption that $\gcrd(\qpp, \qpq) = 1$ together with \cite[Theorem~I.5]{Ore1933}.} modulo~$\qpp$. Set 
\[
\qpp_1 := \qa\cdot \lclm(\qpp, \qpq_*)\cdot \qpq_*^{-1},
\]
where the quaternion~$\qa$ is selected in such a way that $\qpp_1$ is monic. Rearranging the term (see \cite[Theorem~I.13]{Ore1933}) we obtain 
\[
\qpp = \lclm(\qpp_1, \qpq)\cdot \qpq^{-1}.
\]
From the assumption $\gcd(\norm\qpp, \norm\qpq) = 1$ we infer $\gcrd(\qpp, \qpq) = 1$. Hence, the norms of~$\qpp$ and~$\qpp_1$ coincide. We have 
\[
\qpp\qpq = \lclm(\qpp_1, \qpq)\cdot \qpq^{-1}\cdot \qpq = \lclm(\qpp_1, \qpq).
\]
Therefore, the product $\qpp\qpq$ is right-divisible by~$\qpp_1$. Take $\qpq_1 := \qpp\qpq\cdot \qpp_1^{-1}$. It is clear that $\norm\qpq_1 = \norm\qpq$ and $\qpq_1\qpp_1 = \qpp\qpq$.
\end{proof}

\begin{alg}\label{alg:complete_factor}
Let it $\CA$ be a division quaternion algebra over a number field~$\K$. Given a nonzero polynomial $\qpp\in \Ax$, this algorithm outputs a quaternion $\qc\in \CA$ and a list $\CL = (\qpq_1, \dotsc, \qpq_n)$ of monic irreducible quaternionic polynomials such that $\qpp = \qc\cdot \qpq_1\dotsm \qpq_n$.
\begin{enumerate}
\item Let $\qc := \lc(\qpp)$ and update $\qpp$ setting $\qpp := \qc^{-1}\cdot \qpp$. Initialize $\CL := ()$.
\item\label{st:cf:coords} Let $\pp_0, \dotsc, \pp_4\in \Kx$ be the coordinates of~$\qpp$ with respect to the basis $\{1, \qi, \qj, \qk\}$ of~$\Ax$ treated as a free module over $\Kx$.
\item Compute the maximal central factor $\pp := \gcd( \pp_0, \dotsc, \pp_4)$ of~$\qpp$ and set $\qpq := \qpp\cdot \pp^{-1}$.
\item\label{st:cf:factor_p} Factor $\pp$ in~$\Kx$ into the product $\pp = \pr_1^{e_1}\dotsm \pr_m^{e_m}$ of irreducible polynomials $\pr_1, \dotsc, \pr_m\in \Kx$.
\item\label{st:cf:factor_qi} Factor every polynomial~$\pr_i$ in~$\Ax$ using Algorithm~\ref{alg:factor_irreducible_scalar_poly}. Append $e_i$ copies of the output of that algorithm to the list~$\CL$.
\item Factor the norm $\norm\qpq = \qpq\conj{\qpq}$ in $\Kx$ into the product $\norm\qpq = \pq_1^{\varepsilon_1}\dotsm \pq_k^{\varepsilon_k}$ of monic irreducible polynomials $\pq_1, \dotsc, \pq_k\in \Kx$.
\item\label{st:cf:ext_loop} Repeat the following steps as long as the polynomial~$\qpq$ remains non-constant: 
  \begin{enumerate}
  \item\label{st:cf:gcrd} Compute the greatest common right divisor $\qpr := \gcrd(\qpq, \pq_k)$. 
  \item Update the list~$\CL$ prepending $\qpr$ to it at the beginning.
  \item Update the polynomial~$\qpq$ setting $\qpq := \qpq\cdot \qpr^{-1}$.
  \item Decrement the exponent~$\varepsilon_k$ by~$1$. 
  \item If $\varepsilon_k = 0$, then decrement the index~$k$ by~$1$.
  \end{enumerate}
\item Output $\qc$ and the list~$\CL$.
\end{enumerate}
\end{alg}

\begin{poc}
Fix a nonzero polynomial $\qpp\in \Ax$. Proposition~\ref{prop:Beck} asserts that $\qpp$ can be uniquely expressed as $\qpp = \qc\cdot \qpq\cdot \pp$, where $\pp$ is the greatest common divisor (in $\Kx$) of the coordinates of~$\qpp$, constructed in step~\eqref{st:cf:coords} of the algorithm. It is clear that by factoring~$\qpq$ and~$\pp$ into products of irreducible quaternionic polynomials, we obtain a factorization of the original polynomial~$\qpp$. Factorization of~$\pp$ is performed in steps (\ref{st:cf:factor_p}--\ref{st:cf:factor_qi}) using Algorithm~\ref{alg:factor_irreducible_scalar_poly}. It remains to factor~$\qpq$. Assume that $\qpq = \qpr_1\dotsm \qpr_l$ is some factorization of~$\qpq$. Then $\norm\qpq = \norm{\qpr_1}\dotsm \norm{\qpr_l}$ is a factorization in~$\Kx$ of the norm of~$\qpq$. But now, $\Kx$ is a (commutative) unique factorization domain, hence every $\norm{\qpr_i}$ must equal precisely one of the irreducible factors $\pq_1, \dotsc, \pq_k\in \Kx$ of~$\norm\qpq$. Repeatedly applying Lemma~\ref{lem:semicommutativity} we see that there is another factorization $\qpq = \qpq_1\dotsm \qpq_l$ such that $\norm{\qpq_l} = \pq_k$. Consequently, we can extract the right-most factor $\qpq_l = \gcrd(\qpq, \pq_k)$. A simple induction shows that the loop in step~\eqref{st:cf:ext_loop} will eventually produce all the irreducible factors.
\end{poc}

\begin{rem}
One may wonder why we have to use a dedicated method when dealing with polynomials having coefficients in the center of the algebra. The reason is that the norm of a central polynomial $\pp \in \Kx$ is just its square $\norm{\pp} = \pp^2$. Consequently $\gcrd(\pp, \norm\pp) = \gcd(\pp, \pp^2) = \pp$. Hence, even if $\pp$ factors in $\Ax$ into a product $\pp = \qpp\conj{\qpp}$, step~\eqref{st:cf:gcrd} of Algorithm~\ref{alg:complete_factor} cannot extract its factors. The same phenomenon is observed when factoring polynomials in a finite field extension using Trager's algorithm. Say, $\L = \K(\vartheta)$ is a finite extension of~$\K$ and $\pp\in \L[x]$. In Trager's algorithm, to factor~$\pp$, we ``shift'' it and compute the norm of $\pp(x + k\vartheta)$ for some $k\in \ZZ$. Unfortunately, in the case of quaternionic polynomials, this trick no longer works due to the lack of commutativity. That is why we need to  recourse to the more complex method described in Section~\ref{sec:central}. Therefore, factorization of polynomials over division quaternion algebras (hence algebras of degree two) is much more time-consuming than factorization over degree two field extensions. Nonetheless, detailed complexity analysis needs to be a subject of further research.
\end{rem}

\begin{exm}\label{exm:complete_factor}
We will illustrate Algorithm~\ref{alg:complete_factor} with another toy example. Take again the quaternion algebra $\CA = \quat{-1}{-1}{\QQ}$ and let
\begin{multline*}
\qpp :=
(1 + \qk)\cdot x^8 + (-2 + \qi - \qj - 2\qk)\cdot x^7 + (9 + 11\qk)\cdot x^6\\ 
+ (-6 + 12\qi - 6\qj - 2\qk)\cdot x^5 + (-45 + 10\qi - 18\qj - 27\qk)\cdot x^4\\
+ (-44 + 17\qi + 49\qj + 32\qk)\cdot x^3 + (21 - 50\qi + 58\qj + 53\qk)\cdot x^2\\ 
+ (48 - 90\qi - 2\qj + 8\qk)\cdot x + 18 - 36\qi - 12\qj - 6\qk.
\end{multline*}
Then $\qpp = (1 + \qk)\cdot (\pp_0 + \pp_1\qi + \pp_2\qj + \pp_3\qk)$, where
\begin{align*}
\pp_0 &= x^8 - 2x^7 + 10x^6 - 4x^5 - 36x^4 - 6x^3 + 37x^2 + 28x + 6,\\
\pp_1 &= 3x^5 - 4x^4 + 33x^3 + 4x^2 - 46x - 24,\\
\pp_2 &= -x^7 - 9x^5 - 14x^4 + 16x^3 + 54x^2 + 44x + 12,\\
\pp_3 &= x^6 + 2x^5 + 9x^4 + 38x^3 + 16x^2 - 20x - 12.
\end{align*}
Computing the greatest common divisor of these four polynomials, we obtain the maximal central factor of~$\qpp$:
\[
p := \gcd(p_0, \dotsc, p_4) = x^4 + 11x^2 + 16x + 6.
\]
The polynomial~$\pp$ is precisely the one that was factored already in the previous example. It remains to factor the polynomial 
$\qpq := (1+\qk)^{-1}\cdot \qpp\cdot p^{-1},$
that is not divisible by any polynomial with coefficients in~$\K$. The norm of $\qpq$ factors in~$\Kx$ into a product $\norm\qpq = \pq_1\pq_2\pq_3$, where
\begin{align*}
\pq_1 &= x^2 + 1,\\
\pq_2 &= x^2 - 4x + 5,\\
\pq_3 &= x^4 - 3x^2 + 5.
\end{align*}
Computing the successive greatest common right divisors, we obtain factors of~$\qpq$:
\begin{align*}
\qpq_3 &:= \gcrd(\qpq, \pq_3) = x^2 + \qi x -2 - \qk,\\
\qpq_2 &:= \gcrd(\qpq\cdot \qpq_3^{-1}, \pq_2) = x - 2 - \qj,\\
\qpq_1 &:= \gcrd(\qpq\cdot \qpq_3^{-1}\cdot \qpq_2^{-1}, \pq_1) = x - \qi.
\end{align*}
This yields a factorization $\qpp = (1 + \qk)\cdot \qpq_1\qpq_2\qpq_3\cdot \qpq_0\conj{\qpq_0}$, where $\qpq_0$ is the factor of~$\pp$ constructed in Example~\ref{exm:fisp}.
\end{exm}

\section{Root finding}
Polynomial factorization is closely related to the problem of root finding. For quaternionic polynomials, the latter question was studied already 80 years ago by Niven in the case of Hamiltonian quaternions (see \cite{Niven1941}) and 40 years ago by Beck for arbitrary quaternion algebras (see \cite{Beck1979}). Nevertheless, so far, computational methods have concentrated exclusively on Hamiltonian quaternions (see, e.g., \cite{JO2010, Kalantari2013, Niven1941, SKS2019, SP2001}). In this section, we show how to adapt Algorithm~\ref{alg:complete_factor} to find roots in~$\CA$ of a polynomial $\qpp\in \Ax$, where $\CA$ is a division quaternion algebra over a number field.

It is known that the Fundamental Theorem of Algebra holds for the Hamiltonian quaternions (see, e.g., \cite{Kalantari2013,Niven1941,SKS2019}). Hence, every non-constant polynomial with coefficients in $\HH = \quat{-1}{-1}{\RR}$ has a root in~$\HH$. This property no longer holds if we replace $\HH$ by a split quaternion algebra over the reals. Similarly, it is completely unsurprising that, in the situation considered in this paper, there are polynomials with coefficients in division quaternion algebras over number fields that do not have roots in their rings of coefficients. If it is the case, then the algorithm must of course return an empty set.

In a commutative world, finding roots in $\L\supseteq \K$ of a polynomial $\pp\in \Kx$ is equivalent to finding linear factors of~$\pp$ in $\L[x]$. It no longer works when we leave the commutative playpen and consider polynomials over non-commutative division rings. Here we encounter two immediate obstacles. First, a quaternionic polynomial may have infinitely many zeros. Secondly, the evaluation map $\Ax\to \CA$, $\qpp\mapsto \qpp(\qa)$ for some fixed~$\qa$ is no longer a ring homomorphism if $\qa\notin Z(\CA)$. In particular, $\qpp = \qpq\qpr$ and $\qpq(\qa) = 0$ does not yet imply that $\qpp(\qa) = 0$. Fortunately, both these inconveniences are easy to overcome. First, however, for the reader's convenience, let us cite two very classical facts concerning the roots of quaternionic polynomials. Recall that two nonzero quaternions $\qa$, $\qb$ are said to be \term{conjugate} if there is a quaternion $\qc\in \CA^\times$ such that $\qa = \qc\qb\qc^{-1}$. Clearly, conjugacy is an equivalence relation.

\begin{thm}[Gordon--Motzkin]\label{thm:GM}
Let $\CA$ be a division quaternion algebra and $\qpp\in \Ax$. Then:
\begin{enumerate}
\item A quaternion $\qa\in \CA$ is a root of~$\qpp$ if and only if $\qpp$ is right-divisible by $x - \qa$.
\item At most $\deg \qpp$ conjugacy classes of~$\CA$ contains roots of~$\qpp$.
\item If $\qpp = (x - \qa_1)\dotsm (x - \qa_n)$, then every root of~$\qpp$ is conjugate to some~$\qa_i$. 
\item If a conjugacy class of~$\CA$ contains more than one root of~$\qpp$, then $\qpp$ has infinitely many roots in this class. 
\end{enumerate}
\end{thm}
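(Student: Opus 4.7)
The plan is to reduce everything to facts about scalar polynomials in~$\Kx$ via the reduced norm and right-hand division. The key tool is the reduced characteristic polynomial $\charpoly{\qa}(x) = (x - \qa)(x - \conj{\qa}) \in \Kx$ of a quaternion $\qa \in \CA$. This polynomial is central in~$\Ax$; two quaternions are conjugate in~$\CA$ if and only if they share the same characteristic polynomial; and because $\CA$ is a division algebra, $\charpoly{\qa}$ is irreducible in~$\Kx$ when $\qa \notin \K$, while $\charpoly{\qa} = (x - \qa)^2$ when $\qa \in \K$.

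For assertion~(1) I would invoke Ore's right-hand division with remainder to write $\qpp = \qpq \cdot (x - \qa) + \qc$ with $\qc \in \CA$, and then compute directly: since the indeterminate commutes with~$\qa$ inside~$\Ax$ and $\qa$ commutes with its own powers, evaluating $\qpq(x)(x - \qa)$ at $\qa$ cancels term by term, giving $\qpp(\qa) = \qc$, which yields the equivalence. Assertion~(2) then follows: each root~$\qa$ furnishes a right-divisor $x - \qa$ of~$\qpp$, and taking norms (using the centrality of $\charpoly{\qa}$) yields $\charpoly{\qa} \mid \norm{\qpp}$ in~$\Kx$. Roots from distinct conjugacy classes contribute pairwise coprime characteristic polynomials (either distinct monic irreducibles of degree two, squares of distinct linear polynomials, or mixtures of the two), and comparing degrees in the commutative ring~$\Kx$ bounds the number of classes by $\deg \qpp$.

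Assertion~(3) is a sharpening of the same observation: the hypothesis gives $\norm{\qpp} = \prod_{i=1}^{n} \charpoly{\qa_i}$, while any root~$\qb$ satisfies $\charpoly{\qb} \mid \norm{\qpp}$. Unique factorization in~$\Kx$, combined with the classification of the factors $\charpoly{\qa_i}$ recalled above, forces $\charpoly{\qb} = \charpoly{\qa_i}$ for some~$i$, which is precisely conjugacy of~$\qb$ and~$\qa_i$.

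The main obstacle is assertion~(4), because evaluation is not a ring homomorphism and the divisibility argument used for~(1)--(3) does not by itself produce roots. The idea is to reduce $\qpp$ modulo the \emph{central} polynomial $\charpoly{\qa}$: write $\qpp = \qpq \cdot \charpoly{\qa} + (\qc_0 + \qc_1 x)$ with $\qc_0, \qc_1 \in \CA$. The centrality of $\charpoly{\qa}$ makes the evaluation of the product $\qpq \cdot \charpoly{\qa}$ behave correctly, so for every $\qa'$ in the conjugacy class of~$\qa$ one obtains $\qpp(\qa') = \qc_0 + \qc_1 \qa'$. Since $\CA$ is a division algebra, the linear equation $\qc_0 + \qc_1 \qa' = 0$ admits at most one solution whenever $\qc_1 \neq 0$; the existence of two distinct roots in the class therefore forces $\qc_1 = 0$, and then $\qc_0 = 0$, so $\charpoly{\qa}$ right-divides~$\qpp$ and every element of the class is a root of~$\qpp$. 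Finally, the two roots cannot both be scalar (a scalar fills its own singleton conjugacy class), so the class consists of non-scalar quaternions and is cut out by prescribed reduced trace and norm; this is a non-degenerate affine quadric in the three-dimensional $\K$-space of pure quaternions, and having a rational point over the infinite number field~$\K$, it is infinite.
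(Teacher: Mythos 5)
The paper itself does not prove this theorem: it simply refers the reader to the original Gordon--Motzkin article and to \cite[\S16]{Lam2001}. Your blind proof is correct and self-contained, and its strategy --- pushing everything down to divisibility statements for central (scalar) polynomials in $\Kx$ via the reduced norm --- is very much in the spirit of the rest of the paper, which repeatedly performs the same reduction. A few specific comments. For~(1), the term-by-term cancellation of $\bigl(\qpq(x)(x-\qa)\bigr)(\qa)$ is exactly the standard argument; this is the one place where an evaluation ``substitution'' works despite evaluation not being a ring homomorphism, and you exploit it correctly. For~(2), the parenthetical ``using the centrality of $\charpoly{\qa}$'' is unnecessary --- what you actually use is multiplicativity of the reduced norm, $\norm{(\qpq\qpr)} = \norm\qpq\cdot\norm\qpr$, applied to $\qpp=\qpq\cdot(x-\qa)$. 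The coprimality of characteristic polynomials of roots from distinct classes does need Dickson's theorem together with your observation that each $\charpoly{\qa}$ is either an irreducible quadratic or the square of a linear, and you have this. The same dichotomy is what makes the unique-factorization step in~(3) go through (a scalar root $\qb$ forces some $\charpoly{\qa_i}=(x-\qb)^2$, not merely a stray linear factor), so that part is sound. For~(4), dividing $\qpp$ on the right by the \emph{central} quadratic $\charpoly{\qa}$ so that the linear remainder $\qc_0+\qc_1x$ satisfies $\qpp(\qa')=\qc_0+\qc_1\qa'$ for every $\qa'$ conjugate to~$\qa$ (because $\charpoly{\qa}(\qa')=0$ and centrality lets the product term be evaluated) is the key idea, and the division-algebra hypothesis then kills $\qc_1$ and $\qc_0$; the standard alternative route in Lam's book goes through the Gordon--Motzkin substitution rule, but your remainder argument is shorter and equally rigorous. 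The final infinitude claim could also be justified by observing that the class is the orbit of the $4$-dimensional algebraic group $\CA^{\times}$ modulo the $2$-dimensional centralizer $\K(\qa)^{\times}$, hence infinite over the infinite field~$\K$; your quadric description is equivalent (the smooth conic $\norm{\qb_0}=\norm{\qa_0}\ne 0$ in the $3$-space of pure quaternions, with a rational point), but it would benefit from a one-line citation or justification that such a quadric has infinitely many $\K$-points.
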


For a proof of the above theorem we refer the reader to \cite{GM1966} or \cite[\S16]{Lam2001}. For a nonzero quaternion $\qa\in \CA$ we denote its \term{characteristic polynomial} by $\charpoly\qa = \norm{(x - \qa)} = x^2 - \Tr(\qa)\cdot  x + \norm{(\qa)}\in \Kx$. It is obvious that $\charpoly\qa = \charpoly{\conj{\qa}} = \charpoly\qb$ for every conjugate~$\qb$ of~$\qa$. Conversely, Dickson's theorem (see e.g., \cite[Theorem~16.8]{Lam2001}) asserts that $\charpoly\qa = \charpoly\qb$ implies that $\qa$ and $\qb$ are conjugate. The next theorem is taken from~\cite{Beck1979}.

\begin{thm}[Beck]\label{thm:Beck}
Let $\CA$ be a division quaternion algebra and $\qpp\in \Ax$ a non-constant polynomial. Then:
\begin{enumerate}
\item A quaternion $\qa$ is conjugate to a root~$\qb$ of~$\qpp$ if and only if $\charpoly{\qa}$ divides $\norm{\qpp}$ in $\Kx$ if and only if $\qa$ is a root of $\norm{\qpp}$.
\item If $\charpoly\qa$ divides $\qpp$, for some $\qa\in \CA^\times$ then every conjugate of~$\qa$ is a root of~$\qpp$.
\end{enumerate}
\end{thm}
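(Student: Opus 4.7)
The plan is to dispose of assertion~(2) first by a short direct calculation and then to prove assertion~(1) by cycling through the three conditions: (a)~$\Rightarrow$~(b), (b)~$\Leftrightarrow$~(c), and finally the main implication (b)~$\Rightarrow$~(a).

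For (2), the key observation I would use is that when $h\in\Kx$ is central and $\qpp = \qpq\cdot h$ in $\Ax$, the coefficients of~$h$ commute with arbitrary powers of any $\qb\in\CA$, so the evaluation map satisfies $\qpp(\qb) = \qpq(\qb)\cdot h(\qb)$ despite the general failure of multiplicativity. Applying this with $h = \charpoly\qa$ to any conjugate~$\qb$ of~$\qa$, Dickson's theorem gives $\charpoly\qb = \charpoly\qa$, and since $\qb$ satisfies its own characteristic polynomial, $\charpoly\qa(\qb) = 0$, whence $\qpp(\qb) = 0$.

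For assertion~(1), the direction (a)~$\Rightarrow$~(b) follows by writing $\qpp = \qpq\cdot(x - \qb)$ via Theorem~\ref{thm:GM}, taking norms to obtain $\norm\qpp = \norm\qpq\cdot\charpoly\qb$, and invoking Dickson's theorem in the form $\charpoly\qa = \charpoly\qb$. The equivalence (b)~$\Leftrightarrow$~(c) is standard when $\qa\notin\K$: in that case $\charpoly\qa$ is irreducible of degree~2 and is the minimal polynomial of~$\qa$ over~$\K$, so for any $p\in\Kx$ one has $p(\qa)=0$ iff $\charpoly\qa\mid p$. When $\qa\in\K$, so $\charpoly\qa = (x-\qa)^2$, I would use that evaluation at a central element is a ring homomorphism to get $\norm\qpp(\qa) = \qpp(\qa)\cdot\conj\qpp(\qa)$; in a division algebra one factor vanishes, and applying the standard involution (which fixes $\qa$) shows both factors vanish, so Theorem~\ref{thm:GM} lets me peel off $(x-\qa)$ from each of $\qpp$ and $\conj\qpp$, yielding $(x-\qa)^2\mid\norm\qpp$.

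The crux is (b)~$\Rightarrow$~(a). I would set $\qpr := \gcrd(\qpp,\charpoly\qa)$ and argue by cases. If $\deg\qpr\geq 1$, then $\qpr$ right-divides the degree-two polynomial $\charpoly\qa$, so some factor $x-\qb$ right-divides both $\qpp$ and $\charpoly\qa$; Theorem~\ref{thm:GM} identifies $\qb$ as a root of~$\qpp$, and comparing the relations $\qb^2 = \Tr(\qa)\qb - \norm(\qa)$ (from $\charpoly\qa(\qb)=0$) with $\qb^2 = \Tr(\qb)\qb - \norm(\qb)$ forces $\charpoly\qb = \charpoly\qa$ (the degenerate case $\qb\in\K$ only occurring if $\qa\in\K$ and $\qb=\qa$). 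Dickson then gives $\qa\sim\qb$. To exclude $\qpr = 1$, I would invoke the B\'ezout identity in the right-Euclidean ring~$\Ax$, producing $\qpu\qpp + \qpv\charpoly\qa = 1$ for some $\qpu,\qpv\in\Ax$. Multiplying on the right by~$\conj\qpp$, substituting $\qpp\conj\qpp = \norm\qpp = \charpoly\qa\cdot\qps$, and shuffling $\charpoly\qa$ to the left using its centrality leaves $\charpoly\qa\cdot(\qpu\qps + \qpv\conj\qpp) = \conj\qpp$; thus $\charpoly\qa$ left-divides~$\conj\qpp$, and taking conjugates (using $\conj{\charpoly\qa} = \charpoly\qa$) shows it also right-divides~$\qpp$, contradicting $\gcrd(\qpp,\charpoly\qa) = 1$. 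The hard part is exactly this B\'ezout manipulation: the non-commutativity of~$\Ax$ forces careful tracking of sides, and it is the centrality of $\charpoly\qa$ together with the involution that rescues the argument; everything else reduces to bookkeeping on top of Theorem~\ref{thm:GM} and Dickson's theorem.
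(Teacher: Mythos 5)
The paper does not actually prove Theorem~\ref{thm:Beck}: it states it and cites Beck's 1979 article. So there is no internal proof to compare against; what matters is whether your argument is correct, and it is. Your treatment of~(2) via the observation that right-multiplication by a central polynomial is compatible with evaluation is exactly right, and your three-step cycle for~(1) is sound. The only implication that requires real work is (b)~$\Rightarrow$~(a), and your B\'ezout argument is the right tool: from $\qpu\qpp + \qpv\charpoly\qa = 1$, right-multiplying by $\conj\qpp$, substituting $\norm\qpp = \charpoly\qa\cdot\qps$, and sliding the central $\charpoly\qa$ past $\qpu$ and $\qpv$ gives $\charpoly\qa\cdot(\qpu\qps+\qpv\conj\qpp)=\conj\qpp$; applying the involution (which fixes $\charpoly\qa$ and reverses products) then shows $\charpoly\qa$ right-divides $\qpp$, contradicting $\gcrd(\qpp,\charpoly\qa)=1$. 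After that, extracting a monic linear right factor $x-\qb$ of $\gcrd(\qpp,\charpoly\qa)$ and comparing the two quadratic relations satisfied by $\qb$ correctly yields $\charpoly\qa=\charpoly\qb$ when $\qb\notin\K$, with the $\qb\in\K$ case collapsing as you note, and Dickson's theorem finishes. One small remark: in the $\deg\qpr=2$ subcase you do not even need to locate a linear factor of $\qpr$ by hand, since $\qpr=\charpoly\qa$ then right-divides $\qpp$ and your own proof of~(2) already shows $\qa$ itself is a root. Overall this is a clean, self-contained proof that relies only on Theorem~\ref{thm:GM}, Dickson's theorem, centrality of $\charpoly\qa$, and the right-Euclidean structure of $\Ax$, which is precisely the toolkit the rest of the paper assumes.
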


Fix a nonzero quaternionic polynomial $\qpp\in\Ax$. Write it again as a product $\qpp = \qc\cdot \qpq\cdot \pp$, where $\qc$ is the leading coefficient, $\pp$ the maximal central factor of~$\qpp$, and $\qpq$ is not divisible by any non-constant polynomial from $\Kx$. In view of the above two theorems, it is clear that to find roots of~$\qpp$, all we need to do is to find its linear right factors. To this end, it suffices to consider just the linear and quadratic factors of~$\pp$ together with the quadratic factors of $\norm\qpq$. Moreover, since $\qpp$ may have infinitely many roots, the best we can do is to enumerate roots up to conjugacy relation. This is precisely what Algorithm~\ref{alg:root_finding} does. First, however, we need to state the following lemma, which is widely known, but we are not aware of any convenient reference.

\begin{lem}\label{lem:charpoly_divides_p}
A quaternion $\qa\in \CA\setminus \K$ is a root of a polynomial $\pp\in \Kx$ if and only if $\charpoly{\qa}$ divides~$\pp$.
\end{lem}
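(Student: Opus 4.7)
The plan is to reduce the statement to commutative facts about the minimal polynomial of $\qa$ over $\K$. The key observation is that even though $\Ax$ is noncommutative, the coefficients of both $\pp$ and $\charpoly{\qa}$ lie in $\K = Z(\CA)$, hence commute with $\qa$. Consequently the evaluation map $\Kx \to \K[\qa]$, $\pq \mapsto \pq(\qa)$, is a ring homomorphism — none of the usual pathologies of quaternionic evaluation apply here.

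For the ``if'' direction, I would invoke the Cayley--Hamilton relation $\qa^2 - \Tr(\qa)\qa + \norm{\qa} = 0$, which in the notation of the paper reads $\charpoly{\qa}(\qa) = 0$. If $\pp = \charpoly{\qa}\cdot \pq$ in $\Kx$, then applying the homomorphism above gives $\pp(\qa) = \charpoly{\qa}(\qa)\cdot \pq(\qa) = 0$, so $\qa$ is a root of $\pp$.

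For the ``only if'' direction, the main point is that $\charpoly{\qa}$ is actually the minimal polynomial of $\qa$ over $\K$. Since $\CA$ is a division algebra, the commutative subring $\K[\qa]\subseteq \CA$ is an integral domain, and since $\qa\notin \K$, it is strictly larger than $\K$. The relation $\charpoly{\qa}(\qa)=0$ therefore shows that $\K[\qa]\cong \quo{\Kx}{\ideal{\charpoly{\qa}}}$ is a field of degree $2$ over $\K$; equivalently, $\charpoly{\qa}$ is irreducible in $\Kx$ and the minimal polynomial of $\qa$ is precisely $\charpoly{\qa}$. Now suppose $\pp(\qa)=0$. Performing the ordinary (commutative) division with remainder in $\Kx$, write $\pp = \charpoly{\qa}\cdot \pq + \pr$ with $\deg \pr < 2$. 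Evaluating at $\qa$ gives $\pr(\qa) = 0$, and since $\{1,\qa\}$ is $\K$-linearly independent in $\K[\qa]$, the remainder $\pr$ must vanish identically. Hence $\charpoly{\qa}$ divides $\pp$, as required.

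The only mildly delicate point — which I would flag as the main obstacle — is justifying that evaluation at $\qa$ behaves as a homomorphism despite the ambient noncommutativity; this is handled once and for all by noting that every coefficient appearing in either $\pp$ or $\charpoly{\qa}$ lies in the center $\K$ of $\CA$.
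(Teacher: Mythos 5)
Your proof is correct, and it takes a genuinely different and more self-contained route than the paper. The paper's proof invokes a noncommutative factor theorem (Lam, \emph{A First Course in Noncommutative Rings}, Theorem~16.6) to conclude from $\pp(\qa)=0$ that $\pp=\qpq\cdot\charpoly{\qa}$ for some $\qpq\in\Ax$, and then observes that because $\pp$ and $\charpoly{\qa}$ both lie in $\Kx$ the cofactor $\qpq$ must as well. Your argument instead never leaves commutative territory: you note that evaluation at $\qa$ is a genuine ring homomorphism when restricted to $\Kx$ (since all coefficients sit in the center), identify $\charpoly{\qa}$ as the minimal polynomial of $\qa$ over $\K$ via the fact that $\K[\qa]$ is a domain of dimension at most $2$ containing $\qa\notin\K$, and then finish with ordinary division with remainder in $\Kx$ together with the linear independence of $\{1,\qa\}$ over $\K$. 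The paper's route is shorter by leaning on an external noncommutative result; yours is more elementary and makes explicit exactly where the hypothesis $\qa\notin\K$ is used. One small point of economy: you do not actually need $\K[\qa]$ to be a field or $\charpoly{\qa}$ to be irreducible for the conclusion --- the division-with-remainder step together with $\K$-linear independence of $\{1,\qa\}$ already suffices --- but including it does no harm and clarifies the picture.
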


\begin{proof} 
Assume that $\pp(\qa) = 0$. Then \cite[Theorem~16.6]{Lam2001} asserts that $\pp = \qpq\cdot \charpoly\qa$ for some polynomial $\qpq\in \Ax$. But $\pp$ and $\charpoly\qa$ sit in $\Kx$, hence also $\qpq\in \Kx$ (see e.g., \cite[Exercise~16.1]{Lam2001}). The opposite implication is trivial. 
\end{proof}

We are now ready to present the root finding algorithm, which is an easy adaptation of the factorization procedure presented in Algorithm~\ref{alg:complete_factor}.

\begin{alg}\label{alg:root_finding} 
Given a nonzero polynomial $\qpp\in \Ax$, this algorithm outputs a set $\CR\subset \CA$ such that every element of~$\CR$ is a root of~$\qpp$ and every root of~$\qpp$ is conjugate to a unique element of~$\CR$. 
\begin{enumerate}
\item Let $\qc := \lc(\qpp)$ and update $\qpp$ setting $\qpp := \qc^{-1}\cdot \qpp$.
\item Let $\pp_0, \dotsc, \pp_4\in \Kx$ be the coordinates of~$\qpp$ with respect to the basis $\{1, \qi, \qj, \qk\}$ of~$\Ax$ treated as a free module over $\Kx$.
\item Compute the maximal central factor $\pp := \gcd( \pp_0, \dotsc, \pp_4)$ of~$\qpp$ and set $\qpq := \qpp\cdot \pp^{-1}$.
\item Factor $\pp$ in~$\Kx$ into the product $\pp = \pr_1^{e_1}\dotsm \pr_m^{e_m}$ of irreducible polynomials $\pr_1, \dotsc, \pr_m\in \Kx$.
\item\label{st:rf:roots_in_K} Initialize $\CR$ to be the set of roots \textup(in~$\K$\textup) of all the linear factors of~$\pp$:
\[
\CR := \bigl\{ a\in \K : \pr_i = x - a\text{ for some }i\leq k\bigr\}.
\]
\item\label{st:rf:factors_of_p} For every polynomial $\pr_i$ of degree~$2$:
  \begin{enumerate}
  \item Execute Algorithm~\ref{alg:subfield} with input~$\pr_i$, if it reports a failure, then reiterate the loop.
  \item\label{st:rf:roots_of_p} Otherwise, if Algorithm~\ref{alg:subfield} returns a factorization $\pr_i = (x - \qa_i)(x - \conj{\qa_i})$, then check if $\CR$ contains an element conjugate to~$\qa_i$.
  \item If it does not, then add $\qa_i$ to~$\CR$.
  \end{enumerate}
\item Factor the norm $\norm\qpq = \qpq\conj{\qpq}$ in $\Kx$ into the product $\norm\qpq = \pq_1^{\varepsilon_1}\dotsm \pq_k^{\varepsilon_k}$ of monic irreducible polynomials $\pq_1, \dotsc, \pq_k\in \Kx$. 
\item For every polynomial~$\pq_j$ of degree~$2$: 
  \begin{enumerate}
  \item\label{st:rf:roots_of_q} Compute the greatest common right divisor $x - \qb_j = \gcrd(\pq_j, \qpq)$.
  \item Check if $\CR$ contains an element conjugate to~$\qb_j$.
  \item\label{st:rf:add_root_of_q} If it does not, then add~$\qb_j$ to~$\CR$.
  \end{enumerate}
\item Output the set~$\CR$.
\end{enumerate}
\end{alg}

\begin{poc}
Fix a quaternionic polynomial $\qpp = \qc\cdot \qpq\cdot \pp$ and let~$\CR$ be the set constructed by the algorithm. First, we will prove that every element of~$\CR$ is a root of~$\pp$. If an element $a\in \CR$ was constructed in step~\eqref{st:rf:roots_in_K}, then it is clear that $\qpp(a) = \pp(a) = 0$. Likewise, if $\qa_i\in \CR$ was constructed in step~\eqref{st:rf:roots_of_p}, then $\charpoly{\qa_i} = (x - \qa_i)(x - \conj{\qa_i})$ divides~$\pp$. Hence, $\qa_i$ is again a root of~$\qpp$ (and so are all its conjugates by Theorems~\ref{thm:Beck}). Now, take an element $\qb_j\in \CR$ constructed in step~\eqref{st:rf:roots_of_q}. Then $x - \qb_j$, being the greatest common right divisor of~$\qpq$ and~$\pq_j$, is, in particular, a right divisor of~$\qpp$, since $\pp$ is central. It follows that $\qb_j$ is a root of~$\qpp$. 

Conversely, we will now show that every root of~$\qpp$ is conjugate to some element of~$\CR$. Assume that $\qa\in \CA$ is a root of~$\qpp$. Then $x - \qa$ is a right divisor of~$\qpp$, by Theorem~\ref{thm:GM}. We need to consider three cases. If $\qa$ sits in~$\K$, then $x - \qa$ divides~$\pp$ and so $\qa$ is one of the elements constructed in step~\eqref{st:rf:roots_in_K} of the algorithm. Hence, without loss of generality, we can assume that $\qa\notin\K$. If $x - \qa$ divides~$\pp$, then the characteristic polynomial~$\charpoly{\qa}$ of~$\qa$ divides~$\pp$, by Lemma~\ref{lem:charpoly_divides_p}. Thus, $\charpoly{\qa}$ is one of the factors $\pr_i$ of degree~$2$ considered in step~\eqref{st:rf:factors_of_p}. Consequently, either $\qa$ is added to~$\CR$, or $\conj{\qa}$ is added to~$\CR$, or $\CR$ already contains an element conjugate to one of these two. But, $\qa$ and $\conj{\qa}$ are conjugate by Dickson's theorem (see e.g., \cite[Theorem~16.8]{Lam2001}). Therefore, either way, $\CR$ contains an element conjugate to~$\qa$. 

Finally, assume that $x - \qa$ does not divide~$\pp$. It follows that $x - \qa$ must be a right divisor of~$\qpq$ since $\pp$ is central. But then the norm $\norm(x - \qa) = \charpoly{\qa}$ is irreducible in $\Kx$ and divides $\norm{\qpq}$. Consequently, $\charpoly{\qa} = \pq_j$ for some $j\leq k$ and so $x - \qa = \gcrd( \qpq, \pq_j)$. Therefore, either $\qa$ itself is added to~$\CR$ in step~\eqref{st:rf:add_root_of_q} or $\CR$ contains an element conjugate to~$\qa$. This way, we have proved that every root of~$\qpp$ is conjugate to some element of~$\CR$. This element is uniquely determined since the elements of~$\CR$ are pairwise non-conjugate. 
\end{poc}

\section{Conclusion and further studies}
We have presented an explicit algorithm for factorization of unilateral polynomials with coefficients in division quaternion algebras over number fields. As mentioned earlier, in general, a quaternionic polynomial may have infinitely many different factorizations into irreducible factors. The algorithm presented in the paper outputs just one of these factorizations. This fact makes the complexity of the algorithm rather tricky. It has been empirically observed by the author that the output can be ``suboptimal'' in the sense that the sizes of coefficients can become arbitrarily large. This phenomenon occurs mostly when factoring central polynomials since Algorithm~\ref{alg:factor_irreducible_scalar_poly} may suffer from the explosive growth of coefficients that is typical to Euclidean-like algorithms over polynomial rings. Methods to overcome this obstacle by selecting factors with small coefficients (or restricting the growth of the coefficients) should be a subject for further studies.

The algorithms introduced in the paper were implemented by the author in a package  \qPoly{} for the computer algebra system \Magma{}. The package is distributed under the MIT license and can be freely downloaded from the author's web page \url{http://www.pkoprowski.eu/qpoly}. 


\end{document}